\newcommand{\set}{\ensuremath{\mathsf{Set}}}
\newcommand{\cat}{\ensuremath{\mathsf{Cat}}}
\newcommand{\grp}{\ensuremath{\mathsf{Grp}}}
\newcommand{\grpd}{\ensuremath{\mathsf{Grpd}}}
\newcommand{\xmod}{\ensuremath{\mathsf{XMod}}}
\newtheorem*{proposition*}{Proposition}
\newtheorem*{lemma*}{Lemma}
\newcommand{\Z}{\mathcal{Z}}
\newcommand{\TJ}{\T^\J}
\newcommand{\aut}{\mathsf{Aut}}
\newcommand{\defaut}{\mathsf{DefInn}}
\newcommand{\B}[1]{\mathcal{B}(#1)}
\newcommand{\C}{\mathbb{C}}
\newcommand{\D}{\mathbb{D}}
\newcommand{\E}{\mathcal{E}}
\newcommand{\F}{\mathcal{F}}
\newcommand{\J}{\mathcal{J}}
\newcommand{\T}{\mathbb{T}}
\newcommand{\mS}{\mathbb{S}}
\newcommand{\la}{\langle}
\newcommand{\ra}{\rangle}
\newcommand{\x}{\mathsf{x}}
\newcommand{\y}{\mathsf{y}}
\newcommand{\z}{\mathsf{z}}
\newcommand{\Sort}{\mathsf{Sort}}
\newcommand{\Term}{\mathsf{Term}}
\newcommand{\Horn}{\mathsf{Horn}}
\newcommand{\Tmod}{\T\mathsf{mod}}
\newcommand{\Smod}{\mS\mathsf{mod}}
\newcommand{\inv}{\mathsf{Inv}}
\newcommand{\perm}{\mathsf{Perm}}
\newcommand{\id}{\mathsf{Id}}
\newcommand{\dom}{\mathsf{dom}}
\newcommand{\cod}{\mathsf{cod}}
\newcommand{\StrMonCat}{\mathsf{StrMonCat}}
\newcommand{\mon}{\mathsf{Mon}}
\newcommand{\ob}{\mathrm{Ob}}
\newcommand{\arr}{\mathrm{Arr}}
\newcommand{\fpTmod}{\mathsf{fp}\Tmod}
\newcommand{\tensor}{\otimes}
\newcommand{\TJmod}{\TJ\mathsf{mod}}
\newcommand{\Aut}{\mathsf{Aut}}
\title{Polymorphic Automorphisms and the Picard Group} 
\titlerunning{Polymorphic Automorphisms and the Picard Group} 
\author{Pieter Hofstra}{Dept. of Mathematics \& Statistics,
University of Ottawa, Ottawa, Ontario, Canada \and \emph{phofstra@uottawa.ca} }{}{}{Research funded by an NSERC Discovery Grant}
\author{Jason Parker}{Department of Mathematics \& Computer Science,
Brandon University,  
Brandon, Manitoba, Canada \and \emph{parkerj@brandonu.ca} }{}{}{Postdoctoral research funded by NSERC grant of  R. Lucyshyn-Wright (Brandon)}
\author{Philip J. Scott\footnote{corresponding author}}{Dept. of Mathematics \& Statistics, University of Ottawa, Ottawa, Ontario, Canada   
\and \emph{philip.scott@uottawa.ca}
}{}{}{Research funded by an NSERC Discovery Grant}
\authorrunning{P.\,Hofstra, J.\, Parker,  and P. J.\, Scott} 
\keywords{Partial Horn Theories, Monoidal Categories, Definable Automorphisms, Polymorphism, Indeterminates, Normal Forms} 
\begin{document}

\maketitle

\begin{abstract}
We investigate the concept of definable, or inner, automorphism in
the logical setting of partial Horn theories. The central technical result extends
a syntactical characterization of the group of such automorphisms (called the covariant isotropy group)
associated with an algebraic theory to the wider class of quasi-equational theories. 
We apply this characterization to prove that the isotropy group of a strict monoidal category
is precisely its Picard group of invertible objects. Furthermore, we obtain 
an explicit description of the covariant isotropy group of a presheaf category.
\end{abstract}

\section{Introduction}
\label{sec:intro}

In algebra, model theory, and computer science, one encounters the notion of \emph{definable automorphism} 
(the nomenclature varies by discipline).
In first-order logic for example (see e.g.~\cite{Shoenfield}), 
an automorphism $\alpha$\/ of a model $M$\/ is called \emph{definable} (with parameters in $M$) when there is a formula $\varphi(x,y)$\/
in the ambient language (possibly containing constants from $M$) such that for all $a,b \in M$\/ we have
\[ \alpha(a)=b \Longleftrightarrow M \models \phi(a,b). \]
The case of groups is instructive: for a group $M$, consider the 
formula $\phi(x,y)$\/ given as
\[ \phi(x,y): \; y=c^{-1}xc \]
for some $c \in M$. This defines an (inner) automorphism of $M$. 
Note that in this case the automorphism is also determined by a term $t(x):=c^{-1}xc$\/ via $a \mapsto t(a)$.

 These definable automorphisms
have various interesting aspects: first of all, they are in some sense \emph{polymorphic} or uniform. This means roughly that
the same term $t$, possibly after replacing constants from $M$, can also define an automorphism of
another model $N$. Secondly, the definable automorphisms can also provide a generalized notion of \emph{inner automorphism}, even for
theories where it does not make sense to speak of group-theoretic conjugation. Indeed, Bergman~\cite[Theorem 1]{Bergman} shows that in the category of groups,
the definable group automorphisms, i.e. the inner automorphisms given by conjugation, 
can be characterized purely \emph{categorically} by the fact that they extend naturally along any homomorphism. That is: an automorphism
$\alpha:G \xrightarrow{\sim} G$\/ is inner precisely when for any homomorphism $m:G \to H$\/ there is an extension $\alpha_m : H \xrightarrow{\sim} H$\/ making diagram (a) commute
and also making
\[
(a) \  \xymatrix{
G \ar[d]_\alpha \ar[r]^m & H \ar[d]^{\alpha_m} \\
G \ar[r]_m & H
} \qquad\qquad
(b) \ \xymatrix{
H \ar[d]_{\alpha_m} \ar[r]^n & K \ar[d]^{\alpha_{nm}} \\
H \ar[r]_n & K
}
\]
diagram (b) commute for any further homomorphism 
$n : H \to K$, so that in particular $\alpha = \alpha_{\mathsf{id}_G}$ by diagram (a). 
If $\alpha$\/ is conjugation by $g \in G$, then $\alpha_m$\/ is conjugation by $m(g) \in H$. 
Conversely, given any system of group automorphisms $\{\alpha_m : H \xrightarrow{\sim} H \mid m:G \to H\}$ 
with $\alpha = \alpha_{\mathsf{id}_G}$ that makes diagrams (a) and (b) commute, 
Bergman shows that there is a unique element $s \in G$ such that $\alpha$ is given by conjugation with $s$. 
Bergman therefore refers to such a system $\{\alpha_m \mid m:G \to H\}$\/ as an \emph{extended inner
automorphism} of $G$. 

In categorical logic, we have a canonical method for studying this phenomenon. To any category $\C$, we may associate the functor
\begin{equation}\label{eq:isotropyfunctor}
\Z_\C:\C \to \grp \; ; \qquad \Z_\C(C) := \aut(\pi:C/\C \to \C).
\end{equation}
Let us unpack this. We have the co-slice category $C/\C$\/ whose objects are maps $C \to D$ and whose arrows are commutative triangles. 
The projection functor $\pi:C/\C \to C$\/ sends $C \to D$\/ to $D$. We then consider the group of natural automorphisms of this projection functor, i.e. 
the group of \emph{invertible} natural transformations $\alpha: \pi \Rightarrow \pi$. To give such an $\alpha$\/ is equivalent to giving, for each object $m:C \to D$\/ of $C/\C$,
an automorphism $\alpha_m:D \xrightarrow{\sim} D$, subject to the naturality condition that for any composable pair $m : C \to D, n : D \to E$ in $\C$, we have $\alpha_{nm}n=n\alpha_{m}$ as in diagram (b) above. 
Thus, in Bergman's terminology, $\Z_\C(C)$\/ is the group of extended inner automorphisms of $C$. We call $\Z_\C$\/ the \emph{(covariant) isotropy group (functor)} of $\C$.
Another useful way of thinking about this group is by noticing that the assignment $C \mapsto \aut(C)$\/ is generally not functorial, unless $\C$ is a groupoid. The isotropy group
offers a remedy: the assignment $C \mapsto \Z_\C(C)$\/ \emph{is} functorial, as is straightforward to check, and for each $C$\/ there is a comparison homomorphism 
\begin{equation}\label{eq:comp}
\theta_C: \Z_\C(C) \to \aut(C) \; ; \qquad \alpha \mapsto \alpha_{\mathsf{id}_C}
\end{equation}
that sends an extended 
inner automorphism $\alpha$\/ to its component at the identity of $C$.\footnote{
P. Freyd~\cite{Freyd} studied a somewhat similar notion while modelling Reynolds' parametricity for parametric polymorphism.
 As a special case, his work leads to a \emph{monoid} of natural endomorphisms
 of the projection functor, whereas in our case, we would obtain the
 subgroup of invertible elements in this monoid.} We can then turn Bergman's aforementioned
  result for the category $\grp$ into a \emph{definition} for an arbitrary category $\C$, by defining an 
  automorphism $f : C \xrightarrow{\sim} C$ of an object $C \in \C$ to be \emph{inner} just if $f$ is in 
  the image of $\theta_C : \Z_\C(C) \to \aut(C)$. Less precisely, the automorphism $f : C \xrightarrow{\sim} C$ is inner if it can be coherently extended along any arrow out of $C$. 

(For readers familiar with topos theory and/or earlier papers on the subject of isotropy groups, we point out that in~\cite{FHS, FHK} we consider instead the \emph{contravariant}
isotropy groups $\aut(\pi : \C/C \to \C)$. Now if $\mathbb{T}$\/ is a suitable logical theory with classifying topos $\B{\mathbb{T}}$, then (a restriction of) 
the contravariant isotropy group of $\B{\mathbb{T}}$
coincides with the covariant isotropy group of the category $\fpTmod$\/ of finitely presented $\mathbb{T}$-models. Moreover, 
calculation of the latter group generally also yields a description of the covariant isotropy group of the larger category $\Tmod$\/ of \emph{all} $\mathbb{T}$-models,
which is our focus in the present paper.)

In~\cite{HPS}, the case where $\C$\/ is the category of models of an equational theory is analysed. Among other things, a complete syntactic characterization 
of covariant isotropy for such a $\C$ is obtained, recovering not only Bergman's result for 
$\C = \grp$ but also characterizing the definable automorphisms of other common algebraic structures such 
as monoids and rings. 
 In applying the general characterization in specific instances, one typically needs to analyse the result of adjoining one or more indeterminates
to a given model, and this in turn leads one to consider the \emph{word problem} for such models. 

The present paper, which is based on the PhD research~\cite{thesis} of the second author,
 is concerned with the analysis of the notion of isotropy or definable automorphism for (strict) monoidal categories and related structures. 
It hardly needs arguing that monoidal categories play various important roles in mathematics and theoretical computer science, both as objects of study in their own right,
as models of logical theories, and as basic tools for studying other phenomena. 
However, we should point out here an observation by Richard Garner~\cite[Proposition 3]{Garner} to the effect that both $\cat$\/ and $\grpd$,
the categories of small categories and small groupoids respectively, have \emph{trivial} covariant isotropy, in the sense that for any category/groupoid $\C$ we have 
$\Z(\C)=1$, the trivial group. As such, it is perhaps surprising that the category of strict monoidal categories
has \emph{non-trivial} isotropy. In fact, and this is the central result of the present paper, 
the isotropy group of a strict monoidal category is precisely its \emph{Picard group} (its
group of invertible objects). 

Since the theory of strict monoidal categories is not a purely equational theory, we cannot directly use results from~\cite{HPS}. Instead, we need to work in the setting of
 \emph{quasi-equational theories}. These are multi-sorted theories in which the operations can be \emph{partial}; equivalently, they are
finite-limit theories. These include the theories of categories, groupoids, strict monoidal categories, symmetric/braided/balanced monoidal categories, and crossed modules.
They also include what one might call \emph{diagram theories}, which are theories describing diagrams of a specified shape in a category of models. As a special case, one obtains theories whose categories of models are presheaf categories.\footnote{Not to be confused with the so-called \emph{theories of presheaf type}, which
are theories whose classifying topos happens to be a presheaf topos.} Our first main contribution of the paper is then a generalization of the syntactic characterization of
isotropy from equational theories to this wider class of quasi-equational theories.

While we indicated why the non-trivial isotropy of strict monoidal categories is perhaps surprising, 
there is also a sense in which it is to be expected. Indeed, since strict monoidal categories
are monoids internal to $\cat$, we expect that the isotropy of strict monoidal categories is closely related to that of monoids. 
Since the isotropy of a monoid $M$\/ is its subgroup of 
invertible elements, the conjecture that the isotropy of a strict monoidal category is its group of invertible objects is not unreasonable. 
However, it is not at all immediate that the isotropy of
a strict monoidal category should be determined \emph{completely} by its set of objects; 
the recognition that this \emph{is} the case is the second main contribution of this paper. 

\emph{A priori}, one can try to establish this result in a variety of ways. 
First of all, it can be approached purely syntactically, making careful analysis of
the word problem for strict monoidal categories. 
However, several aspects of this analysis can also be cast in more conceptual terms, giving 
rise to a categorical way of deriving 
the isotropy of strict monoidal categories from that of monoids. 
We thus also include a more categorical viewpoint,  
which  applies to several other theories 
of categorical structures, including crossed modules. 

\section{Quasi-equational theories}\label{sec:QE}
We begin by reviewing the relevant notions from categorical logic. For more 
details concerning quasi-equational theories and partial Horn logic, we refer to~\cite{Palmgren}.
 For a general treatment of categorical logic, see~\cite{Pitts}. 

\begin{definition}[\textbf{Signatures, Terms, Horn Formulas, Horn Sequents, Quasi-Equational Theories}]
{\em 
\
\begin{itemize}
\item A \emph{signature} $\Sigma$ is a pair of sets $\Sigma = (\Sigma_{\mathsf{Sort}}, \Sigma_{\mathsf{Fun}})$, 
where $\Sigma_{\mathsf{Sort}}$ is the set of \emph{sorts} of $\Sigma$ and $\Sigma_{\mathsf{Fun}}$ is the set 
of \emph{function/operation symbols} of $\Sigma$. Each element $f \in \Sigma_{\mathsf{Fun}}$ comes equipped with a 
finite tuple of sorts $(A_1, \ldots, A_n, A)$, and we write $f : A_1 \times \ldots \times A_n \to A$.

\item Given a signature $\Sigma$, we assume that we have a countably infinite set of variables 
of each sort $A$. Then one can recursively define the set $\Term(\Sigma)$ of \emph{terms} 
of $\Sigma$ in the usual way, so that each term will have a uniquely defined sort. We write 
$\Term^c(\Sigma)$ for the set of \emph{closed} terms of $\Sigma$, i.e. terms containing no variables. 

\item Given a signature $\Sigma$, one can recursively define the set $\Horn(\Sigma)$ of \emph{Horn formulas} of $\Sigma$\/
 in the usual way, where a Horn formula is a finite conjunction of equations between elements of $\Term(\Sigma)$. We write $\top$ for the empty conjunction. 

\item A \emph{Horn sequent} over a signature $\Sigma$ is an expression of the form 
$\varphi \vdash^{\vec{x}} \psi$, where $\varphi, \psi \in \mathsf{Horn}(\Sigma)$ and have variables among $\vec{x}$. 

\item A \emph{quasi-equational theory} $\T$ over a signature $\Sigma$ is a set of Horn sequents over $\Sigma$, which we call the \emph{axioms} of $\T$. \qed
\end{itemize}
}
\end{definition}

\noindent One can set up a deduction system of \emph{partial Horn logic} (PHL) for quasi-equational theories, axiomatizing the notion of a \emph{provable sequent}
$\varphi \vdash^{\vec{x}} \psi$\/. Accordingly, for a theory $\T$\/ we have the notion of a 
$\T$-provable sequent; moreover, if $\top \vdash^{\vec{x}} \varphi$\/ is $\T$-provable, then we simply say that $\T$\/ proves $\varphi$, and write $\T \vdash^{\vec{x}} \varphi$. 

We refer the reader to~\cite[Definition 1]{Palmgren} for the logical axioms and inference rules of PHL. 
The distinguishing feature of this deduction system is that equality of terms is \emph{not} assumed to be reflexive, 
i.e. if $t(\vec{x})$\/ is a term over a given signature, then $\top \vdash^{\vec{x}} t(\vec{x}) = t(\vec{x})$\/ is \emph{not} 
a logical axiom of partial Horn logic, unless $t$\/ is a variable. In other words, if we abbreviate the equation $t = t$\/ by 
$t \downarrow$ (read: $t$ \emph{is defined}), then unless $t$ is a variable, the sequent $\top \vdash^{\vec{x}} t \downarrow$\/ is \emph{not} a logical axiom of PHL. Furthermore, the logical inference rule of term substitution is then only formulated for \emph{defined} terms.

\begin{example}
We have the following examples of quasi-equational theories:
\begin{itemize}
\item Every single-sorted algebraic theory is a quasi-equational theory; this includes
 the usual algebraic theories of (commutative) monoids, (abelian) groups, (commutative) unital rings, etc.
\item The theories of (small) categories, groupoids, categories with a (chosen) terminal object, categories with (chosen) finite products, 
categories with (chosen) finite limits, locally cartesian closed categories, and elementary toposes, 
can all be axiomatized as quasi-equational theories over a two-sorted signature (with one sort $O$ for objects and one sort $A$ for arrows). 
For details 
see \cite[Example 4 and Section 6]{Palmgren}. 
The theory of (small) strict monoidal categories can also be axiomatized as a quasi-equational theory (see Section \ref{sec:picard} below).  
\item If $\T$ is any quasi-equational theory and $\J$ is any small category, then one can axiomatize the functor category $\Tmod^\J$ by a quasi-equational theory $\TJ$, see \cite[Chapter 5]{thesis}. \qed
\end{itemize}
\end{example}  

\noindent In the remainder of the paper, by \emph{theory} we shall mean \emph{quasi-equational theory}, unless explicitly stated otherwise. 

We now review the set-theoretic semantics of PHL. This follows the standard pattern of algebraic theories, 
with the key difference being that function symbols are now only interpreted 
as \emph{partial} functions. We write $f : A \rightharpoondown B$ for a partial function
from $A$\/ to $B$, which is by definition a \emph{total} function $f : \mathsf{dom}(f) \to B$ for some subset $\mathsf{dom}(f) \subseteq A$.
If $\Sigma$\/ is a signature, then a $\Sigma$\emph{-structure} $M$ is a family of sets $M_C$\/ indexed by the sorts $C$\/ of $\Sigma$, together with
interpretations of the function symbols $f:A_1 \times \cdots \times A_k \to A$\/ as partial functions $f^M:M_{A_1} \times \cdots \times M_{A_k} \rightharpoondown M_A$.
By induction on the structure of a term $t$\/ in variable context $x_1:A_1, \ldots, x_k:A_k$, 
we obtain its interpretation as a partial function $t^M: M_{A_1} \times \cdots \times M_{A_k} \rightharpoondown M_A$\/
 in a $\Sigma$-structure $M$, while a Horn formula $\varphi(x_1, \ldots, x_k)$\/ is interpreted as a subset 
 $\varphi(x_1, \ldots, x_k)^M \subseteq M_{A_1} \times \ldots \times M_{A_k}$.

A $\Sigma$-structure $M$\/ \emph{satisfies} a Horn sequent $\varphi \vdash^{\vec{x}} \psi$\/ if $ \varphi(x_1, \ldots, x_k)^M \subseteq \psi(x_1, \ldots, x_k)^M$.
When $\T$\/ is a theory, then a $\Sigma$-structure $M$\/ is a $\T$-\emph{model} when it satisfies all the $\T$-axioms, 
and hence all the $\T$-provable sequents (by soundness of partial Horn logic). 

\begin{definition} {\em Let $\Sigma$\/ be a signature and $M,N$\/ $\Sigma$-structures. 
A \emph{homomorphism} $h:M \to N$\/ is a family of  total functions $h = (h_A : M_A \to N_A)_{A : \Sort}$\/ with the property that 
if $f : A_1 \times \ldots \times A_n \to A$ is any function symbol of $\Sigma$\/ and $(a_1, \ldots, a_n) \in \mathsf{dom}\left(f^M\right)$, 
then $\left(h_{A_1}(a_1), \ldots, h_{A_n}(a_n)\right) \in \mathsf{dom}\left(f^N\right)$ and 
$ h_A\left(f^M(a_1, \ldots, a_n)\right) = f^N(h_{A_1}(a_1), \ldots, h_{A_n}(a_n))$. The homomorphism $h$ \emph{reflects definedness} if 
moreover $\left(h_{A_1}(a_1), \ldots, h_{A_n}(a_n)\right) \in \mathsf{dom}\left(f^N\right)$ always implies $(a_1, \ldots, a_n) \in \mathsf{dom}\left(f^M\right)$. \qed }
\end{definition} 
When working with homomorphisms we often suppress the sort subscripts. The $\T$-models and their homomorphisms 
then form a category $\Tmod$, which is complete and cocomplete.

\begin{definition}
{\em A \emph{morphism of theories} $\rho:\T \to \mS$\/ consists of a mapping $A \mapsto \rho(A)$\/ from the sorts of $\T$\/ to the sorts of $\mS$ 
and a mapping $f \mapsto \rho(f)$\/ from the function symbols of $\T$\/ to the terms of $\mS$\/ that preserves both typing and provability. \qed
}
\end{definition}

When $\rho:\T \to \mS$\/ is a morphism of theories, we have an
induced functor $\rho^*:\Smod \to \Tmod$ by \cite[Proposition 28]{Palmgren}. This functor $\rho^*$ sends an 
$\mS$-model $M$ to the $\T$-model $\rho^*M$ with $\left(\rho^*M\right)_A := M_{\rho(A)}$ for each sort 
$A$ of $\T$ and $f^{\rho^*M} := \rho(f)^M$ for each function symbol $f$ of $\T$. In particular, for every sort 
$A$\/ of $\T$\/ there is a forgetful functor $U_A:\Tmod \to \set$\/ sending a model $M$\/ to the 
carrier set $M_A$ (induced by the theory morphism from the single-sorted empty theory to $\T$ that sends the unique sort of the former theory to the sort $A$).
Each such functor also has a left adjoint $F_A$\/ (see e.g.~\cite[Theorem 29]{Palmgren}), 
giving for a set $X$\/ the free $\T$-model $F_A(X)$\/ generated by $X$: $F_A \dashv U_A : \set \rightleftarrows \Tmod$. 
\begin{definition}
{\em For a $\T$-model $M$, we can form the extension $\T(M)$, the \emph{diagram theory of} $M$, adapted from ordinary model theory. It is the extension of $\T$\/ by 
\begin{itemize}
\item A constant $\overline{a}:A$\/ and an axiom $\top \vdash \overline{a} \downarrow$ for every element $a \in M_A$ (for every sort $A$).
\item An axiom $\top \vdash \overline{f(a_1, \ldots, a_k)}=f(\overline{a_1}, \ldots, \overline{a_k})$\/ for every 
function symbol \\ $f : A_1 \times \cdots \times A_k \to A$\/ and tuple $(a_1, \ldots, a_k) \in \dom\left(f^M\right)$. \qed
\end{itemize}
}
\end{definition}

\noindent For better readability, we will generally omit the bar notation on constants of $M$. Clearly $M$\/ is a model of $\T(M)$, and in fact it is the \emph{initial} model: 
$\T(M)\mathsf{mod} \simeq M/\Tmod$ (see \cite[Lemma 2.2.4]{thesis} for a proof).
The obvious theory morphism $\T \to \T(M)$\/ corresponds to the forgetful functor $M/\Tmod \to \Tmod$. 

One of the central constructions in the present paper is that
of \emph{adjoining an indeterminate} to a model. Given a $\T$-model $M$\/ and a sort $A$\/ of $\T$,
we form a new model $M\la \x_A \ra$\/ which is the result of freely adjoining a new element $\x_A$\/ of sort $A$ to $M$. Formally, one can define
$M\la \x_A \ra$\/ as $M+F_A(1)$, where $F_A(1)$\/ is the free $\T$\/-model on one generator of sort $A$. 
Consequently, homomorphisms $M\la \x_A \ra \to N$\/ are in natural bijective correspondence with
pairs $(h,n)$\/ consisting of a homomorphism $h: M \to N$\/ and an element $n \in N_A$. We will 
write $\T(M,\x_A)$\/ for the theory extending the diagram theory $\T(M)$ by a new constant $\x_A : A$ 
and a new axiom $\top \vdash \x_A \downarrow$. One can then equivalently define the $\T$-model 
$M\la \x_A \ra$ as the initial model of $\T(M, \x_A)$. For a sequence of (not necessarily distinct) sorts $A_1, \ldots, A_k$, 
we will also write $\T(M,\x_1, \ldots, \x_k)$ for the theory extending $\T(M)$ by new, pairwise distinct constants 
$\x_i : A_i$ and axioms $\top \vdash \x_i \downarrow$ for each $1 \leq i \leq k$. 

Finally, we note that for a $\T$-model $M$\/, an indeterminate $\x_A$\/ of sort $A$, and an arbitrary sort $B$, we have 
\begin{equation}\label{eq:terms}
 M\la \x_A \ra_B=\left\{t \in \Term^c\left(\T(M),\x_A\right) \mid t : B \text{ and } \T(M, \x_A) \vdash t \downarrow\right\}/{=}, 
\end{equation}
i.e. the carrier set $M\la \x_A \ra_B$ is the quotient of the set of provably defined closed terms 
of sort $B$, possibly containing $\x_A$ and constants from $M$, modulo the partial congruence relation of 
$\T(M,\x_A)$-provable equality. For more details, see \cite[Remark 2.2.7]{thesis}.  

\section{Isotropy}
We now embark on the syntactic description of the covariant isotropy group of a theory.
First, let us briefly review the simpler situation of a single-sorted equational theory $\T$. That is, we describe the isotropy 
group of a $\T$-model $M$ (details are in~\cite{HPS}).
The elements of the model $M\la \x\ra$\/ (for $\x$\/ an indeterminate) can be described 
explicitly as congruence classes of terms $t(\x)$, built from the indeterminate $\x$, constants from $M$, and the operation symbols of $\T$. 
Two such terms are congruent if they are $\T(M, \x)$-provably equal. For example, if $\T$\/ is the theory of monoids
and $M$\/ is a monoid with $m_1,m_2,m_3 \in M$, unit $e$\/, and $m_1m_2=m_3$, then the terms $t=\x m_1\x m_1m_2\x$\/ and $\x em_1e \x em_3 \x$ are congruent. 

For a set-theoretic $\T$-model $M$, each congruence class 
$[t] \in M\la \x \ra$\/ can be interpreted as a function $t^M: M \to M$, via substitution into the indeterminate $\x$. We thus have a mapping
\[ M\la \x \ra \rightarrow [M,M]  \; ; \qquad [t] \mapsto t^M \]
where $[M,M]$\/ is the set of functions from $M$\/ to itself (well-definedness follows from soundness of the set-theoretic semantics of equational logic). 
Moreover, this mapping is a homomorphism of monoids, 
where the monoid structure on $M\la \x \ra$\/ is given by substitution: $[t] \cdot [s] := [ t[s/\x] ]$, the unit being $[\x]$. 
We then restrict on both sides to the invertible elements, obtaining a group homomorphism $\inv(M\la \x \ra) \rightarrow \perm(M)$
from the group of substitutionally invertible (congruence classes of) terms to the permutation group of the set $M$. 
However, we do not wish to just consider arbitrary permutations of the set $M$, 
but rather \emph{automorphisms} of the $\T$-model $M$; in fact, we want to consider \emph{inner} automorphisms,
i.e. automorphisms that extend naturally along any homomorphism $M \to N$. On the level of terms $[t] \in M\la \x \ra$, this is achieved by the following definition:
$[t]$\/ is said to \emph{commute generically with} a function symbol $f:A^n \to A$\/ ($A$ being the unique sort of $\T$) if  
\[ \T(M, \x_{1}, \ldots, \x_{n})  \vdash \ t[f(\x_{1}, \ldots, \x_{n})/\x] = f(t[\x_1/\x], \ldots, t[\x_n/\x]).\]
We then form the subgroup $\defaut(M)$\/ of $\inv(M\la \x \ra)$\/ on those $[t]$\/ that commute 
generically with all function symbols of the theory. This ensures
that such a $[t]$\/ induces an \emph{automorphism} of the $\T$-model $M$ and not merely a permutation 
of its underlying set, thus yielding a mapping $(-)^M:\defaut(M) \to \aut(M)$. 
However, it turns out that such an automorphism induced by an element of $\defaut(M)$ is also \emph{inner}. Indeed,
given $h:M \to N$, we obtain a homomorphism $h\la \x \ra: M\la \x \ra \to N \la \x \ra$ of the substitution monoids, 
which restricts to a group homomorphism $\defaut(M) \to \defaut(N)$. 
It can then be shown that the subgroup $\defaut(M)$\/ is isomorphic to the covariant isotropy group of $M$, where $\theta_M: \Z(M) \to \aut(M)$\/ is the comparison homomorphism~\eqref{eq:comp}:
\[ 
\xymatrix{
& \defaut(M) \ar[d]^{(-)^M} \ar[r]_\subseteq & \inv(M\la \x \ra) \ar[d]^{(-)^M}
\\
 \Z(M) \ar@{-->}[ur]^\cong \ar[r]_{\theta_M} & \aut(M) \ar[r]_\subseteq  & \perm(M) 
 }
\]
We now explain how to extend this result to a (multi-sorted) \emph{quasi-equational theory} $\T$. 
First of all, in order to accommodate multi-sortedness, we need to consider,
for a $\T$-model $M$, the model $M\la \x_A \ra$ obtained by adjoining an indeterminate $\x_A$\/ of sort $A$ for any sort $A$ of $\T$. From 
the fact  that under the interpretation mapping $t \mapsto t^M$\/ substitution corresponds to composition, it follows 
 that $M\la \x_A \ra_A$\/ carries a monoid structure, defined as before in terms of substitution into the indeterminate 
 $\x_A$. We now write 
\[ M\la \bar{\x} \ra := \prod_{A:\Sort} M\la \x_A \ra_A\] for the sort-indexed product monoid of these substitution monoids.
An element of $M\la \bar{\x} \ra$\/ is therefore a sort-indexed family of congruence classes of terms 
$[s_A]_A$\/, where $s_A \in \Term^c(\T(M), \x_A)$ is of sort $A$ and $\T(M, \x_A) \vdash s_A \downarrow$. 
Given such a tuple $[s_A]_A$, its interpretation gives us, at each sort $A$, a \emph{total} function 
$s_A^M: M_A \to M_A$ (because $s_A$ is provably defined in $\T(M, \x_A)$), defined via substitution into the indeterminate $\x_A$ (cf. \cite[Remark 2.2.12]{thesis}).
The central definitions towards characterizing those $[s_A]_A \in M\la \bar{\x} \ra$\/ that induce elements of isotropy for $M$ are then as follows:

\begin{definition}
\label{commutesgenericallydefn}
{\em Let $M$\/ be a $\T$-model and $[s_C]_C \in  M\la \bar{\x}\ra$. 
\begin{itemize}
\item If $f : A_1 \times \ldots \times A_n \to A$ is a function symbol of $\Sigma$, then we say that $([s_C])_C$ \emph{commutes generically with} 
$f$\/ if the Horn sequent 
\[ f(\mathsf{x}_1, \ldots, \mathsf{x}_n) \downarrow \ \vdash \ s_A[f(\mathsf{x}_{1}, \ldots, \mathsf{x}_{n})/\mathsf{x}_A] = f\left(s_{A_1}[\mathsf{x}_1/\mathsf{x}_{A_1}], \ldots, s_{A_n}[\mathsf{x}_n/\mathsf{x}_{A_n}]\right) \] 
is provable in $\T(M, \mathsf{x}_{1}, \ldots, \mathsf{x}_{n})$. 
\item We say that $([s_C])_C$ is \emph{invertible} if for each sort $A$ there is some $\left[s_A^{-1}\right] \in M\la \mathsf{x}_A \ra_A$ with \[ \T(M, \mathsf{x}_A) \vdash s_A\left[s_A^{-1}/\mathsf{x}_A\right] = \mathsf{x}_A = s_A^{-1}[s_A/\mathsf{x}_A]. \]
\item We say that $([s_C])_C$ \emph{reflects definedness} if for every function symbol $f : A_1 \times \ldots \times A_n \to A$ in $\Sigma$ with $n \geq 1$, the sequent  
\[ f\left(s_{A_1}[\mathsf{x}_1/\mathsf{x}_{A_1}], \ldots, s_{A_n}[\mathsf{x}_n/\mathsf{x}_{A_n}]\right) \downarrow \ \ \vdash f(\mathsf{x}_{1}, \ldots, \mathsf{x}_{n}) \downarrow \] is provable in $\T(M, \mathsf{x}_{1}, \ldots, \mathsf{x}_{n})$. \qed
\end{itemize}
}
\end{definition}

\noindent The condition that $[s_C]_C$\/ commutes generically with the function symbols of $\T$\/ then ensures that $[s_C]_C$\/ induces not just an endofunction of each carrier set $M_C$\/ but in fact an endo\emph{morphism} of the $\T$-model $M$. Invertibility of $[s_C]_C$ then ensures that these endomorphisms are bijective. However, due to the fact that function symbols
are interpreted as partial maps, a (sortwise) bijective homomorphism is not in general an isomorphism in $\Tmod$: a bijective homomorphism is an isomorphism precisely when it
reflects definedness (cf. \cite[Lemma 2.2.33]{thesis}). Thus, the third condition ensures that the inverses $\left[s_A^{-1}\right]$\/
also induce endomorphisms. 

Let us write $\defaut(M)$\/ again for the subgroup of the product monoid $M\la \bar{\x} \ra$\/ consisting of those elements satisfying the three conditions above. 
We then have the following characterization, of which detailed proofs can be found in
~\cite[Theorems 2.2.41, 2.2.53]{thesis}

\begin{theorem}\label{thm:QEiso}
Let $\T$\/ be a quasi-equational theory. Then for any $M \in \Tmod$ we have 
\[ \Z(M) \cong \defaut(M)=\left\{ [s_C]_C \in M\la \bar{\x} \ra \ \rule[-2mm]{.2mm}{6mm} \ [s_C]_C \begin{array}{l} \text{is invertible, commutes generically with} \\ \text{all operations, and reflects definedness.}  
\end{array} \right\}. \qed \]
\end{theorem}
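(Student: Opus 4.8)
The plan is to construct mutually inverse group homomorphisms $\Psi : \defaut(M) \to \Z(M)$ and $\Phi : \Z(M) \to \defaut(M)$, both built around the \emph{universal element} $\x_A \in M\la \x_A\ra_A$. The two tools doing all the work are the universal property of $M\la \x_A\ra$ --- a homomorphism $M\la \x_A\ra \to N$ is a pair $(m : M \to N,\, n \in N_A)$, which I write $\tilde{m}_n$ --- together with soundness and completeness of partial Horn logic, which let me pass between provable sequents over $\T(M,\ldots)$ and validity of equations and definedness statements in the (initial) models of these theories.

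For the map $\Psi$, given $[s_C]_C \in \defaut(M)$ I set $\Psi([s_C]_C) = \beta$, where for $m : M \to N$ and $n \in N_A$ I define $\beta_m(n) := s_A^N(n)$, interpreting the constants from $M$ via $m$ and substituting $n$ for $\x_A$; this is a well-defined \emph{total} function because $\T(M,\x_A) \vdash s_A \downarrow$ and $N$ is a model of that theory. I would then check in turn that (i) $\beta_m$ is a homomorphism, which is exactly what ``commutes generically'' buys: interpreting that sequent in $N$ with $\x_i \mapsto n_i$ for a defined tuple $(n_i) \in \dom(f^N)$ yields $\beta_m(f^N(\vec{n})) = f^N(\beta_m(n_1),\ldots)$ and simultaneously forces the right-hand side to be defined; (ii) $\beta_m$ is a bijection, with inverse $n \mapsto s_A^{-1,N}(n)$, directly from the invertibility equations; and (iii) $\beta_m$ is an \emph{isomorphism} and not merely a bijective homomorphism --- here I invoke that a bijective homomorphism is an isomorphism precisely when it reflects definedness (cf.\ \cite[Lemma 2.2.33]{thesis}), and ``reflects definedness'' of $[s_C]_C$ is exactly the statement that $\beta_m$ does so. Naturality of $\beta$ is immediate: for $n : N \to P$, the fact that homomorphisms commute with interpretation of defined terms gives $n(s_A^N(x)) = s_A^P(n(x))$, which is the required equation $\beta_{nm}\circ n = n \circ \beta_m$. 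Finally $\Psi$ is a monoid homomorphism because substitution interprets as composition: at sort $A$, $(s_A[t_A/\x_A])^N = s_A^N \circ t_A^N$, so $\Psi([s]\cdot[t]) = \Psi([s])\circ\Psi([t])$, and $\Psi([\x_C]_C) = \mathsf{id}$.

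For the map $\Phi$, given $\alpha \in \Z(M)$ I set $\Phi(\alpha)_A := \alpha_{\iota_A}(\x_A) \in M\la \x_A\ra_A$, where $\iota_A : M \to M\la \x_A\ra$ is the canonical inclusion. The naturality square for the pair $(\iota_A,\, \tilde{m}_n)$, whose composite is $m$, yields the key identity $\alpha_m(n) = \tilde{m}_n(\alpha_{\iota_A}(\x_A)) = s_A^N(n)$; in particular $s_A^{-1,N} = (\alpha_m)^{-1}$, since inversion in $\Z(M)$ is componentwise. Invertibility of $\Phi(\alpha)$ then holds with inverse $(\alpha^{-1})_{\iota_A}(\x_A)$: by completeness it suffices to verify $s_A^P(s_A^{-1,P}(p)) = p$ in every model, which is just $\alpha_m \circ (\alpha_m)^{-1} = \mathsf{id}$. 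For ``commutes generically'' with $f : A_1\times\cdots\times A_n \to A$, I pass to the initial model $N$ of $\T(M,\x_1,\ldots,\x_n)$ augmented by the axiom $\top \vdash f(\x_1,\ldots,\x_n)\downarrow$; there $f(\vec{\x})$ is an element of sort $A$, so the map sending $\x_A \mapsto f(\vec{\x})$ exists and naturality gives $\alpha_m(f(\vec{\x})) = s_A[f(\vec{\x})/\x_A]$, while $\alpha_m(\x_i) = s_{A_i}[\x_i/\x_{A_i}]$. Applying the homomorphism $\alpha_m$ to $f(\vec{\x})$ and using that it preserves $f$ yields the commutation equation in $N$, which --- $N$ being the generic such model --- is precisely provability of the sequent in $\T(M,\vec{\x})$ conditional on $f(\vec{\x})\downarrow$. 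The ``reflects definedness'' condition is obtained the same way, now using that $\alpha_m$ is an \emph{isomorphism}: its inverse is a homomorphism, so if $f^P(\alpha_m(p_1),\ldots)$ is defined then so is $f^P(p_1,\ldots)$, and completeness upgrades this to the required sequent.

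It remains to check that $\Psi$ and $\Phi$ are mutually inverse, which again follows from the identity $\alpha_m(n) = s_A^N(n)$: on one side $\Psi(\Phi(\alpha))_m(n) = s_A^N(n) = \alpha_m(n)$, and on the other $\Phi(\Psi([s_C]))_A = \beta_{\iota_A}(\x_A) = s_A^{M\la \x_A\ra}(\x_A) = [s_A]$, since interpreting $s_A$ in $M\la \x_A\ra$ with $\x_A$ generic returns $s_A$ itself. Hence $\Psi$ is a bijective monoid homomorphism with two-sided inverse $\Phi$, so $\defaut(M)$ is a subgroup of the product monoid $M\la \bar{\x}\ra$ and $\Psi : \defaut(M) \xrightarrow{\cong} \Z(M)$ is a group isomorphism. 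I expect the main obstacle to be the systematic bookkeeping of \emph{partiality}: unlike the equational case of \cite{HPS}, where every term is total and ``commutes generically'' can be read off directly, here one must (a) recognize that a bijective homomorphism need not be an isomorphism and isolate ``reflects definedness'' as the exact remedy, and (b) handle the conditional nature of the commutation and definedness-reflection sequents by working in the initial models that force the relevant operations to be defined, passing between syntax and semantics through completeness of partial Horn logic.
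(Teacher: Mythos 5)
Your proposal is correct and follows essentially the same route as the paper's proof (sketched in Section 3 and given in detail in the cited thesis, Theorems 2.2.41 and 2.2.53): interpret an invertible, generically commuting, definedness-reflecting family of terms as an extended inner automorphism via substitution, and conversely recover the term family by evaluating an element of isotropy at the generic element $\x_A \in M\la \x_A\ra_A$. Your treatment of the partiality issues --- isolating reflection of definedness as exactly what upgrades a bijective homomorphism to an isomorphism, and establishing the conditional sequents by working in initial models of suitably augmented diagram theories --- is the same mechanism the paper relies on.
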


\section{Monoidal categories and the Picard group}\label{sec:picard}
With this description of the isotropy group of an arbitrary quasi-equational theory, we now turn to 
the specific example of strict monoidal categories. We can axiomatize these using the following signature $\Sigma$
(where the first two ingredients comprise the signature for categories):
\begin{itemize}
\item two sorts $O$\/ and $A$\/ (for objects and arrows);
\item function symbols $\dom,\cod:A \to O$,  $\mathsf{id}:O \to A$, and $\circ:A \times A \to A$;
\item function symbols $\otimes_O:O \times O \to O$, $\otimes_A:A \times A \to A$;
\item constant symbols $I_O:O$\/ and $I_A:A$.
\end{itemize}
Whenever reasonable, we omit the subscripts on $\otimes$\/ and $I$. As axioms, we take those 
for categories and add (omitting the hypothesis $\top$):
\begin{itemize}
\item $x \otimes y \downarrow, \qquad I \downarrow$, 
\item $x \otimes (y \otimes z)=(x \otimes y) \otimes z\/, \qquad x \otimes I = x = I \otimes x$,
\item $\dom(f \otimes g)=\dom(f) \otimes \dom(g)\/ , \qquad \cod(f \otimes g)=\cod(f) \otimes \cod(g)$,
\item $(f \otimes g) \circ (h \otimes k)=(f\circ h) \otimes (g \circ k)$,
\item $\mathsf{id}(x \otimes y)=\mathsf{id}(x) \otimes \mathsf{id}(y)\/ , \qquad \mathsf{id}(I_O)=I_A$,
\end{itemize}
where the penultimate axiom of course requires the hypotheses that $f \circ h$ and $g \circ k$ are defined.
Note that in this fragment of logic, we need to include an axiom forcing the tensor products and unit object and arrow to be \emph{total} operations.
Because of strict associativity, we may omit brackets when dealing with nested expressions involving tensor products.
We shall henceforth denote this theory by $\T$, and 
write $\StrMonCat$\/ for its category of models, whose objects are small strict monoidal
categories and whose morphisms are strict monoidal functors. Our goal is now to prove the following:

\begin{theorem}\label{thm:picard}
The covariant isotropy group $\Z(\C)$\/ of a strict monoidal category $\C$\/ is isomorphic to the Picard group of $\C$, i.e. the group of invertible elements in the monoid
of objects of $\C$. \qed 
\end{theorem}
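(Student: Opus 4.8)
The plan is to apply Theorem~\ref{thm:QEiso}, which identifies $\Z(\C)$ with $\defaut(\C)$, the group of sort-indexed families $[s_C]_C$ of provably defined closed terms that are invertible, commute generically with all function symbols, and reflect definedness. Since the signature $\T$ for strict monoidal categories has two sorts $O$ and $A$, an element of $\defaut(\C)$ is a pair $([s_O], [s_A])$ where $s_O$ is a term of sort $O$ in the indeterminate $\x_O$ (built from objects of $\C$, the constant $I_O$, and $\otimes_O$), and $s_A$ is a term of sort $A$ in the indeterminate $\x_A$. The first main step is to understand the word problem for strict monoidal categories well enough to get a normal form for such terms. By strict associativity, a closed term of sort $O$ in $\x_O$ should reduce to a tensor product of objects of $\C$ and copies of $\x_O$, i.e.\ to something of the form $u \otimes \x_O^{\,\epsilon} \otimes v$ or a sum of such, but because $s_O$ must induce a permutation of the object set, I expect the normal form to be forced into the shape $a \otimes \x_O \otimes b$ for fixed objects $a, b \in \C$.

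The key computation is then to extract the constraints imposed by the three conditions of Definition~\ref{commutesgenericallydefn} on the pair $([s_O],[s_A])$. Commuting generically with $\otimes_O : O \times O \to O$ forces $s_O[\x_1 \otimes \x_2 / \x_O] = s_O[\x_1/\x_O] \otimes s_O[\x_2/\x_O]$; writing $s_O = a \otimes \x_O \otimes b$, this equation becomes $a \otimes \x_1 \otimes \x_2 \otimes b = a \otimes \x_1 \otimes b \otimes a \otimes \x_2 \otimes b$, which by cancellation (a normal-form/word-problem fact) should force $b \otimes a = I$, hence $a$ and $b$ are mutually inverse objects and $a \in \mathrm{Pic}(\C)$. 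Commuting with $\otimes_A$, and with $\dom, \cod, \mathsf{id}, \circ$, should then pin down $s_A$ completely in terms of $a$: I expect $s_A[\,f/\x_A\,] = \mathsf{id}(a) \otimes f \otimes \mathsf{id}(b)$, i.e.\ conjugation of arrows by the invertible object $a$. Invertibility and reflection of definedness are expected to hold automatically once $a$ is invertible, with inverse given by $a^{-1} = b$. This yields a map $\Z(\C) \cong \defaut(\C) \to \mathrm{Pic}(\C)$ sending $([s_O],[s_A])$ to the object $a$.

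For the converse, I would check that every invertible object $a \in \mathrm{Pic}(\C)$, with chosen inverse $b = a^{-1}$, gives rise to a genuine element of $\defaut(\C)$ via $s_O := a \otimes \x_O \otimes a^{-1}$ and $s_A := \mathsf{id}(a) \otimes \x_A \otimes \mathsf{id}(a^{-1})$, verifying all three conditions directly from the monoidal axioms (strict associativity and unit laws, functoriality of $\otimes$, interchange, and $\mathsf{id}$ preserving $\otimes$). The two assignments should be mutually inverse and compatible with the group structure on $\defaut(\C)$ (substitution composition corresponds to multiplication of objects in $\mathrm{Pic}(\C)$), giving the desired isomorphism $\Z(\C) \cong \mathrm{Pic}(\C)$.

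The main obstacle I anticipate is the word-problem analysis: justifying rigorously that an invertible term $s_O$ of sort $O$ must have the normal form $a \otimes \x_O \otimes b$ with exactly one occurrence of $\x_O$, rather than some more complicated provably-defined closed term, and then extracting cancellation laws from $\T(\C,\x_1,\x_2)$-provable equality. This requires a careful description of the free strict monoidal category on a monoidal category with adjoined indeterminates and a confluent rewriting/normal-form argument, so that equality of terms can be decided and the tensor factors can be cancelled. Controlling the sort-$A$ term $s_A$ and confirming that the generic-commutation constraints leave no freedom beyond the choice of $a$ is the most delicate part; the remaining verifications are expected to be routine consequences of strictness.
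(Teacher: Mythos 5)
Your proposal is correct and follows essentially the same route as the paper: apply Theorem~\ref{thm:QEiso}, use normal forms in $\C\la \x_O\ra$ and $\C\la \x_A\ra$ to force $s_O = a \otimes \x_O \otimes b$ with $a$ invertible and then pin $s_A$ down to $\mathsf{id}(a)\otimes\x_A\otimes\mathsf{id}(b)$, with the converse given by directly verifying that each invertible object yields an element of isotropy. The only differences are in packaging: the paper obtains the sort-$O$ step by exhibiting $\Z_{\mon}(\ob(\C))$ as a retract of $\Z(\C)$ via the adjunction $\ob \dashv \nabla$ and quoting the known isotropy of monoids (your inline computation also needs generic commutation with the constant $I_O$, or substitution-invertibility, to get $a\otimes b = I$ in addition to your $b \otimes a = I$), and it pins down $s_A$ using only $\dom$, $\cod$ and $\mathsf{id}$ together with uniqueness of the sort-$A$ normal forms.
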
 
Because a strict monoidal category is a monoid object in $\cat$, we have two functors
\[ \ob, \arr: \cat(\mon) = \StrMonCat \rightrightarrows \mon. \]
We shall ultimately prove that the diagram
\begin{equation}\label{eq:isocomp}
\xymatrix{
\StrMonCat \ar[rr]^-{\ob} \ar[dr]_\Z && \mon \ar[dl]^{\Z_{\mon}} \\
& \grp
}
\end{equation}
commutes up to isomorphism, showing that the covariant isotropy functor of $\StrMonCat$\/ is 
completely determined by the covariant isotropy functor of $\mon$. Since we have proved in~\cite[Example 4.3]{HPS} 
that the latter sends a monoid $M$\/ to its subgroup of invertible elements, Theorem~\ref{thm:picard} then follows.\footnote{For a general functor $F:\E \to \F$\/ 
it is \emph{not} the case that $\Z_\E \cong \Z_F \circ F$. In fact, in~\cite{FHK} it is explained that in general the relationship between $\Z_\E$\/ and $\Z_\F \circ F$\/ takes
the form of a \emph{span}. The commutativity of~\eqref{eq:isocomp} may thus be expressed by saying that both legs of the span associated with $\ob$\/ are isomorphisms.}

\subsection{Monoidal categories and indeterminates}
In this section we analyse the process of adjoining an indeterminate to a strict monoidal category. Let us first describe
explicitly the result of adjoining an indeterminate to a monoid. 

\begin{definition}{\em
Let $M$\/ be a monoid, and $X$\/ a set of symbols disjoint from $M$. 
\begin{itemize}
\item A \emph{word} over $M\la X \ra$\/ is formal string of symbols from the alphabet $M \cup X$.
\item A word $w$\/ is in \emph{(expanded) normal form}
when it has the form $w=m_0x_0m_1x_1 \cdots x_{n-1}m_n$\/
for $m_i \in M$\/ and $x_j \in X$. In other words, $w$ is in expanded normal form if it contains no two consecutive elements of $M$, and if every occurrence of some $x \in X$ in $w$ is flanked on both sides by an element of $M$. \qed
\end{itemize}}
\end{definition}

\noindent We then have (by taking an arbitrary word, multiplying adjacent elements from $M$\/ and inserting the unit of $M$ whenever necessary):

\begin{lemma}\label{lem:monadj}
When $M=(M,e,\cdot)$\/ is a monoid, every element $w$ of the monoid $M\la \x \ra$\/ has a canonical representative
$w=m_0\x m_1\x \cdots \x m_n$\/ in expanded normal form.

Moreover, the unit of $M\la \x \ra$\/ is represented as the word $e$\/ and multiplication
is given by $(m_0\x m_1\x \cdots \x m_j) \cdot (m'_0\x m'_1\x \cdots \x m'_k)=m_0\x m_1\x \cdots \x (m_j\cdot m'_0) \x m'_1 \cdots \x m'_k$. \qed
\end{lemma}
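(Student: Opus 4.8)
The plan is to exhibit an explicit combinatorial monoid $N$ whose underlying set consists of the words in expanded normal form, equipped with the multiplication $\ast$ described in the statement, and then to identify $N$ with $M\la\x\ra$ by checking that $N$ satisfies the universal property characterizing the latter. Since a monoid homomorphism $M\la\x\ra\to P$ corresponds to a pair consisting of a homomorphism $M\to P$ together with a chosen element of $P$ (the image of $\x$), as recorded earlier, it suffices to equip $N$ with the same universal property; the resulting canonical isomorphism $N\cong M\la\x\ra$ then yields simultaneously the existence and uniqueness of normal forms, the formula for the unit, and the multiplication formula.

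First I would record existence informally: given an arbitrary word over the alphabet $M\cup\{\x\}$, one multiplies together each maximal block of consecutive letters from $M$, and then pads with the unit $e$ --- inserting $e$ between any two adjacent occurrences of $\x$ and at either end if the word begins or ends with $\x$ --- to obtain a word of the form $m_0\x m_1\x\cdots\x m_n$. None of these rewrites changes the represented element of $M\la\x\ra$, since they only use the multiplication of $M$ and the unit laws. This already shows that every element has at least one normal-form representative; canonicity (i.e.\ uniqueness) is the substantive point, and it will fall out of the universal property argument below.

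Next I would define $N$ as the set of normal-form words, with multiplication $\ast$ given by concatenation followed by merging the two letters of $M$ meeting at the junction, namely $(m_0\x\cdots\x m_j)\ast(m'_0\x\cdots\x m'_k)=m_0\x\cdots\x(m_j\cdot m'_0)\x\cdots\x m'_k$, and with unit the one-letter word $e$. The main obstacle of the whole argument is verifying that $(N,\ast,e)$ is a monoid, and specifically that $\ast$ is associative. This is a short case analysis on the middle factor $v$ when forming $(u\ast v)\ast w$ versus $u\ast(v\ast w)$: if $v$ contains at least one occurrence of $\x$, the two merges happen at opposite ends of $v$ and do not interact, so both bracketings yield the same word; if $v$ is a single letter of $M$, both bracketings produce a single triple product at the junction, and equality reduces to associativity of the multiplication of $M$. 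The unit laws for $e$ are immediate from $e\cdot m=m=m\cdot e$.

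Finally I would verify the universal property. The map $\iota\colon M\to N$, $m\mapsto m$, is a homomorphism, and $\x$ is represented by the element $\xi:=e\x e\in N$. For any monoid $P$, homomorphism $\phi\colon M\to P$, and element $p\in P$, I would define $\psi\colon N\to P$ by $\psi(m_0\x m_1\cdots\x m_n)=\phi(m_0)\,p\,\phi(m_1)\cdots p\,\phi(m_n)$; a one-line check using $\phi(m_j\cdot m'_0)=\phi(m_j)\phi(m'_0)$ shows $\psi$ respects $\ast$, and clearly $\psi\iota=\phi$ and $\psi(\xi)=p$. Uniqueness of $\psi$ is forced because every normal form factors in $N$ as $\iota(m_0)\ast\xi\ast\iota(m_1)\ast\cdots\ast\xi\ast\iota(m_n)$, so $\psi$ is determined on all of $N$ by its values on $\iota(M)$ and on $\xi$. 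Thus $N$, equipped with $(\iota,\xi)$, enjoys the universal property of $M\la\x\ra$, giving a canonical isomorphism $N\cong M\la\x\ra$ that sends each normal-form word to the element it represents. Injectivity of this isomorphism is exactly the asserted canonicity of normal forms, while transporting the multiplication of $M\la\x\ra$ across it recovers both the formula for $\ast$ and the unit $e$, completing the proof.
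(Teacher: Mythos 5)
Your proof is correct, and it is more complete than the argument the paper actually gives. The paper justifies this lemma only by the parenthetical remark preceding it---take an arbitrary word, multiply adjacent elements of $M$, and insert the unit $e$ where needed---which is exactly your existence step, and it leaves the uniqueness of the normal form (the word ``canonical'') and the verification that the stated unit and multiplication are the right ones essentially implicit. You instead build the combinatorial monoid $N$ of normal-form words, check associativity of the merge product by the case split on whether the middle factor contains an $\x$, and then verify that $(N,\iota,\xi)$ has the universal property of $M + F1$, namely that maps out of it correspond to pairs of a homomorphism $M \to P$ and an element of $P$---which is precisely the characterization of $M\la\x\ra$ recorded earlier in the paper. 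This buys you canonicity for free (injectivity of the induced isomorphism $N \cong M\la\x\ra$) rather than having to argue about confluence of a rewriting system; the paper itself acknowledges this universal-property route as an alternative when it treats the harder case of $\C\la\x_A\ra$. The only points worth making explicit are that the disjointness of $M$ and $\{\x\}$ is what makes normal-form words unambiguous as strings, and that the factorization $m_0\x m_1\cdots\x m_n = \iota(m_0)\ast\xi\ast\iota(m_1)\ast\cdots\ast\xi\ast\iota(m_n)$ used for uniqueness of $\psi$ relies on the unit laws $m\cdot e = m = e\cdot m$, both of which you use correctly.
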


We now turn turn to the process
of adjoining an indeterminate \emph{object} $\x_O$, i.e. an indeterminate of sort $O$, to a strict monoidal category $\C$. 
In order to determine the objects of $\C\la \x_O \ra$, we note that the functor $\ob: \StrMonCat \rightarrow \mon$\/ has both adjoints:
\[
\xymatrix{
\StrMonCat \ar[rrr]|-{\ob}^\perp_\perp &&& \mon \ar@/_2ex/[lll]_{\Delta} \ar@/^2ex/[lll]^\nabla \\
}
\]
Here $\Delta$\/ sends a monoid $M$\/ to the discrete strict monoidal category on $M$\/ 
and $\nabla$\/ sends $M$\/ to the indiscrete strict monoidal category on $M$. In fact, if $\E$ is \emph{any} category with finite limits, then the forgetful functor $\ob : \cat(\E) \to \E$ has both adjoints (for a proof, mimic the argument for $\E = \set$). 
As such, $\ob : \StrMonCat \to \mon$\/ preserves all limits and colimits. Now by definition $\C\la \x_O\ra \cong \C+F\mathbf{1}$, 
where $F\mathbf{1}$\/ is the free strict monoidal category on a single object;
moreover, the latter is easily seen to be isomorphic to $\Delta(F1)$, the discrete strict monoidal category on the free monoid $F1$ on one generator.
We thus have
\[ \ob(\C\la \x_O\ra) \cong \ob(\C + F\mathbf{1}) \cong \ob(\C)+\ob(F\mathbf{1})=\ob(\C)+F1\cong \ob(\C)\la \x \ra.\]
This shows that the object forgetful functor preserves the process of adjoining an indeterminate of sort $O$.\footnote{Note that for a functor $\rho^*:\mS\mathsf{mod} \to \Tmod$\/ induced by
a theory morphism $\rho:\T \to \mS$\/ it is
not in general the case that $\rho^*(M\la \x \ra) \cong (\rho^*M)\la \x \ra$.}

We now describe the monoid of arrows of $\C\la \x_O \ra$. It is not true that $\arr:\StrMonCat \to \mon$\/ preserves binary coproducts. However, it \emph{does} preserve
the coproduct $\C+F\mathbf{1}$.

\begin{lemma}
If $\C \in \StrMonCat$, we have $\arr(\C\la \x_O \ra) \cong \arr(\C)\la \x \ra$.
\end{lemma}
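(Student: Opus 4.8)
The plan is to understand concretely what the arrows of $\C\la \x_O \ra$ look like, using the description of adjoining an indeterminate from~\eqref{eq:terms}, and then match this against the expanded normal form description of $\arr(\C)\la \x \ra$ furnished by Lemma~\ref{lem:monadj}. The key observation is that $\x_O$ is an indeterminate of sort $O$ (an object), so although we are freely adjoining an \emph{object}, the resulting strict monoidal category acquires new \emph{arrows} built from $\x_O$ via the tensor product together with the existing arrows of $\C$. Since $\x_O$ is an object, the only arrow canonically associated to it is its identity $\mathsf{id}(\x_O)$; tensoring this (and its iterates $\mathsf{id}(\x_O)^{\otimes k}$) on either side of arrows of $\C$ produces the new arrows. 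So morally an arrow of $\C\la \x_O \ra$ should be an alternating tensor string of arrows of $\C$ and copies of $\mathsf{id}(\x_O)$, which is exactly a word over $\arr(\C)\la \x \ra$ with $\x := \mathsf{id}(\x_O)$.

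First I would make this precise by giving a normal form for closed defined arrow-terms of $\T(\C, \x_O)$. Every such term is built from the constants of $\C$, the indeterminate $\x_O$, and the operations $\dom,\cod,\mathsf{id},\circ,\otimes_O,\otimes_A,I$. Using the strict monoidal axioms — in particular $\mathsf{id}(x\otimes y)=\mathsf{id}(x)\otimes\mathsf{id}(y)$, the interchange law $(f\otimes g)\circ(h\otimes k)=(f\circ h)\otimes(g\circ k)$, and functoriality of $\otimes_A$ — I would push all tensor products to the outside and all compositions inside, reducing an arbitrary arrow term to a tensor product of ``blocks'', each block being either an arrow coming from $\C$ or an iterated identity $\mathsf{id}(\x_O)$. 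Contracting adjacent $\C$-blocks via $\otimes_A$ and inserting the identity arrow $I_A$ where needed, one arrives at an expanded normal form $g_0\,\x\,g_1\,\x\cdots\x\,g_n$ with $g_i\in\arr(\C)$ and $\x=\mathsf{id}(\x_O)$, which is precisely the normal form of Lemma~\ref{lem:monadj} for the monoid $\arr(\C)\la\x\ra$. I would then check that the substitution/composition monoid structure on $\arr(\C\la\x_O\ra)$ — here the relevant multiplication is the tensor $\otimes_A$, since that is the monoid operation picked out by $\arr:\StrMonCat\to\mon$ — matches concatenation-with-contraction of normal forms, i.e. the multiplication formula in Lemma~\ref{lem:monadj}.

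The main obstacle I expect is establishing that this normal form is genuinely canonical, i.e.\ that two normal-form words represent the same arrow of $\C\la\x_O\ra$ only if they are literally equal (after the monoid contraction in $\arr(\C)$). This is the word problem for $\C\la\x_O\ra$: I must show that no further $\T(\C,\x_O)$-provable equations collapse distinct normal forms. The clean way to do this is to exhibit $\arr(\C)\la\x\ra$ (equivalently $\arr(\C)+F1$ in $\mon$) as the carrier of an actual strict monoidal category structure on a model $\D$, define a strict monoidal functor $\C\to\D$ sending $\x_O$ to a distinguished object, and use the universal property of $\C\la\x_O\ra\cong\C+F\mathbf{1}$ to produce a comparison map whose arrow component inverts the normal-form map. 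Concretely, I would build $\D$ with object monoid $\ob(\C)\la\x\ra$ — already identified above as $\ob(\C\la\x_O\ra)$ — and arrow monoid $\arr(\C)\la\x\ra$, defining $\dom,\cod$ on a normal-form arrow by applying $\dom,\cod$ blockwise and treating $\x$ as having domain and codomain the free generator, and defining $\circ$ via the interchange-compatible composition of aligned normal forms. Verifying that $\D$ satisfies all the axioms of $\T$ is the technical heart, since the interchange law and the domain/codomain compatibilities must hold on the nose for normal forms; once $\D$ is a model, the universal property yields mutually inverse maps $\arr(\C\la\x_O\ra)\cong\arr(\C)\la\x\ra$, completing the proof.
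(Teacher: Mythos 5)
Your proposal follows essentially the same route as the paper: reduce arrow terms of $\C\la\x_O\ra$ to the alternating normal form $f_1\otimes\mathsf{id}(\x_O)\otimes f_2\otimes\cdots\otimes f_n$ using the strict monoidal axioms and then invoke Lemma~\ref{lem:monadj}. The paper's proof is only a sketch that leaves implicit the uniqueness-of-normal-forms step you rightly flag as the technical heart; your plan of settling it by building the candidate model and verifying the universal property is exactly what the paper does via the explicit description of $\C\la\x_O\ra$ it records immediately after the lemma.
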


\begin{proof}
We sketch a syntactic proof, noting that the result can also be deduced categorically from the fact
that the endofunctor $-+F1: \mon \to \mon$\/ preserves pullbacks.

An element of $\arr(\C \la \x_O \ra)$\/ is a congruence class of terms $t$\/ built up from the operations of $\T$, arrows of $\C$\/, 
and the term $\mathsf{id}(\x_O)$. One shows by induction that every such term $t$\/ is congruent to one of the form
$t = f_1 \otimes \mathsf{id}(\x_O) \otimes f_2 \otimes \mathsf{id}(\x_O) \otimes \cdots \otimes \mathsf{id}(\x_O) \otimes f_n$\/ where
each $f_i$\/ is an arrow of $\C$. Thus, the monoid $\arr(\C \la \x_O \ra)$\/ is isomorphic, by Lemma~\ref{lem:monadj},
to $\arr(\C)\la \x \ra$.
\end{proof} 

\noindent In fact, we may describe the relationship between the functor $(-)+F\mathbf{1}$\/ 
adjoining an indeterminate object to a strict monoidal category and the functor $(-)+F1$\/ adjoining an indeterminate element to a monoid
as follows.

\begin{proposition}
The functor $(-)+F\mathbf{1}: \cat(\mon) \to \cat(\mon)$\/ is naturally isomorphic to $\cat(- + F1)$. \qed
\end{proposition}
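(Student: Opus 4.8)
The plan is to prove the stronger, more structural statement that both functors send an internal category $\C = (O, A, \dom, \cod, \mathsf{id}, \circ)$ in $\mon$ to internal categories with the \emph{same} object- and arrow-monoids, and then to exhibit a natural comparison between them realizing this agreement. Recall first that $F\mathbf{1} \cong \Delta(F1)$, so that $(-) + F\mathbf{1}$ is coproduct with the discrete monoidal category $\Delta(F1)$; and recall that, because $-+F1 : \mon \to \mon$ preserves pullbacks, the endofunctor $\cat(-+F1)$ of $\cat(\mon) = \StrMonCat$ is well-defined, sending $\C$ to the internal category with object-monoid $O + F1$, arrow-monoid $A + F1$, structure maps $\dom + \mathsf{id}_{F1}$, $\cod + \mathsf{id}_{F1}$ and $\mathsf{id} + \mathsf{id}_{F1}$, and composition induced by the canonical isomorphism $(A \times_O A) + F1 \cong (A + F1) \times_{O + F1} (A + F1)$.

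First I would construct, for each $\C$, a comparison morphism $\Phi_\C : \C + F\mathbf{1} \to \cat(-+F1)(\C)$ using the universal property of the coproduct in $\cat(\mon)$. This requires two pieces of data: an internal functor $\C \to \cat(-+F1)(\C)$ and an object of $\cat(-+F1)(\C)$. For the former I take the pair of coprojections $(\iota_O : O \to O + F1, \; \iota_A : A \to A + F1)$; these commute with $\dom, \cod, \mathsf{id}$ essentially by the defining property of the coproduct maps $\dom + \mathsf{id}_{F1}$, etc., and they are compatible with composition precisely because $-+F1$ preserves the pullback $A \times_O A$. For the latter I send the indeterminate object $\x_O$ to the free generator sitting in the $F1$-summand of $O + F1$.

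Next I would verify that $\Phi_\C$ is an isomorphism. By the two preceding lemmas, $\Phi_\C$ acts on objects as the canonical iso $\ob(\C + F\mathbf{1}) \cong \ob(\C) + F1$ and on arrows as the canonical iso $\arr(\C + F\mathbf{1}) \cong \arr(\C) + F1$; since these are monoid isomorphisms and an internal functor in $\mon$ is invertible as soon as it is invertible on its object- and arrow-monoids, $\Phi_\C$ is an isomorphism. Naturality in $\C$ is then checked against the universal property: for a strict monoidal functor $F : \C \to \D$, both $\Phi_\D \circ (F + F\mathbf{1})$ and $\cat(-+F1)(F) \circ \Phi_\C$ restrict to the coprojection along $\C$ and agree on $\x_O$, hence coincide.

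The main obstacle is the verification in the second step that the pair of coprojections genuinely assembles into an internal functor into $\cat(-+F1)(\C)$, and that $\Phi_\C$ has the effect on objects and arrows asserted above; both points hinge on the preservation of the pullback $A \times_O A$ by $-+F1$, which is the fact quoted in the proof of the previous lemma and which guarantees that the (partial) composition is transported correctly. Once this is in place the argument is formal, and indeed the whole proposition is an instance of the general principle that, for an object $S$ of a finitely complete category $\E$ with finite coproducts such that $-+S$ preserves pullbacks, one has $\cat(-+S)(\C) \cong \C + \Delta(S)$ naturally in $\C$.
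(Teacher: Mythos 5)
Your argument is correct and is exactly the proof the paper leaves implicit: it assembles the two preceding lemmas (that $\ob$ and $\arr$ both carry $\C+F\mathbf{1}$ to the corresponding monoid coproducts with $F1$) together with the quoted fact that $-+F1$ preserves pullbacks, builds the comparison map out of the universal property of the coproduct, and observes that an internal functor in $\mon$ is invertible once its object- and arrow-components are. The closing generalization to any $S$ in a finitely complete category with coproducts for which $-+S$ preserves pullbacks is also sound and is a nice way to package why the result holds.
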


\noindent We thus obtain the following explicit description of the strict monoidal category $\C\la \x_O\ra$: 

\begin{description}
\item {\bf Objects: } Words $a_1\x a_2 \x \cdots \x a_n$\/ where each $a_i$\/ is an object of $\C$.
\item {\bf Morphisms: } Words $f_1\x f_2\x \cdots \x f_n$\/ where each $f_i$\/ is an arrow of $\C$.
\item {\bf Domain: } $\dom(f_1\x  \cdots \x f_n)=\dom(f_1)\x \cdots \x \dom(f_n)$.
\item {\bf Codomain: } $\cod(f_1\x  \cdots \x f_n)=\cod(f_1)\x \cdots \x \cod(f_n)$.
\item {\bf Identities: } $\mathsf{id}(a_1\x \cdots \x a_n)=\mathsf{id}(a_1)\x \cdots \x \mathsf{id}(a_n)$.
\item {\bf Composition: } $(f_1\x \cdots \x f_n) \circ (g_1 \x \cdots \x g_n)= f_1g_1 \x \cdots \x f_ng_n$.
\item {\bf Tensors: } $(a_1\x \cdots \x a_n) \otimes (b_1 \x \cdots \x b_m)=a_1 \x \cdots \x (a_n \otimes b_1) \x \cdots \x b_m$.
\item {\bf Tensor units: } $I_O, I_A$\/ (tensor units of $\C$\/ regarded as one-letter words). \qed
\end{description} 

Next, we address the issue of adjoining an indeterminate \emph{arrow} $\x_A$\/ to $\C$. Here we cannot invoke a simple categorical fact
about coproducts, because $\arr:\StrMonCat \to \mon$\/ does not preserve coproducts of the relevant kind (which, to be explicit, is
coproducts with the free strict monoidal category $F\mathbf{2}$, where $\mathbf{2}$\/ is the free-living arrow). We are thus forced to carry out a direct 
\emph{syntactic} analysis of the objects and arrows of $\C\la \x_A \ra$. Note that these are generated, under the operations of domain, codomain, 
identities, composition, and tensor product, from the objects and arrows of $\C$, together with the new arrow $\x_A$. In particular,
there will be two new objects $\dom(\x_A)$\/ and $\cod(\x_A)$, and corresponding identity arrows $\mathsf{id}(\dom(\x_A))$, $\mathsf{id}(\cod(\x_A))$. 

\begin{definition}{\em
Let $\C \in \StrMonCat$. A closed term $t \in \Term^c(\C, \x_A)$\/ of sort $O$\/ is in \emph{normal form} when 
it is of the form $t=a_1 \otimes \x_1 \otimes  \cdots  \otimes \x_{k-1} \otimes a_k$\/, where
each $a_i$\/ is an object of $\C$ and each $\x_i  \in \{\dom(\x_A), \cod(\x_A)\}$. 
A closed term $t \in \Term^c(\C, \x_A)$\/ of sort $A$\/ is in \emph{normal form} when 
it is of the form $t=f_1 \otimes \x_1 \otimes \cdots \otimes \x_{k-1} \otimes f_k$\/, where each 
$f_i$\/ is an arrow of $\C$\/ and each $\x_i \in \{\x_A, \mathsf{id}(\dom(\x_A)), \mathsf{id}(\cod(\x_A))\}$. \qed 
}
\end{definition}

We may now describe $\C\la \x_A\ra$\/ in terms of normal forms. It is straightforward to prove, by directly verifying the universal property,
 that the category described below is indeed isomorphic to $\C \la \x_A \ra$. Alternatively, one can endow the set 
 $\{t \in \Term^c(\C,\x_A) \mid t \downarrow\}$\/ with a rewriting
 system and show that each term has a unique normal form.

\begin{description}
\item {\bf Objects: } closed terms of sort $O$\/ in normal form.
\item {\bf Arrows: } closed terms of sort $A$\/ in normal form.
\item {\bf Domain: } $\dom(f_1 \otimes \x_1 \otimes  \cdots  \otimes \x_{k-1} \otimes f_k)=\dom(f_1) \otimes \y_1 \otimes\cdots \otimes \y_{k-1} \otimes \dom(f_k)$\/ 
where $\y_i=\dom(\x_A)$\/ when $\x_i=\x_A$\/ or $\x_i=\mathsf{id}(\dom(\x_A))$, and $\y_i=\cod(\x_A)$\/ otherwise.
\item {\bf Codomain: } $\cod(f_1 \otimes \x_1 \otimes  \cdots  \otimes \x_{k-1} \otimes f_k)=\cod(f_1) \otimes \y_1 \otimes\cdots \otimes \y_{k-1} \otimes \cod(f_k)$\/ 
where $\y_i=\cod(\x_A)$\/ when $\x_i=\x_A$\/ or $\x_i=\mathsf{id}(\cod(\x_A))$, and $\y_i=\dom(\x_A)$\/ otherwise.
\item {\bf Identities: } $\mathsf{id}(a_1 \otimes \x_1 \otimes  \cdots  \otimes \x_{k-1} \otimes a_k)=\mathsf{id}(a_1) \otimes \mathsf{id}(\x_1) \otimes \cdots \otimes \mathsf{id}(\x_{k-1})\otimes \mathsf{id}(a_k)$.
\item {\bf Composition: } For $t=f_1 \otimes \x_1 \otimes \cdots \otimes \x_{k-1} \otimes f_k$ and $s = g_1 \otimes \x_1' \otimes \cdots \otimes \x_{k-1}' \otimes g_k$\/ with
$\cod(t)=\dom(s)$,
define $s \circ t = (g_1f_1) \otimes \z_1 \otimes \cdots \otimes \cdots \otimes \z_{k-1} \otimes (g_kf_k)$, where $\z_i$ is defined from $\x_i$ and $\x_i'$ in the evident way.
\item {\bf Tensors: } $(a_1\otimes \x_1 \otimes  \cdots \otimes \x_{n-1}\otimes a_n) \otimes (b_1 \otimes \y_1 \otimes \cdots  \otimes \y_{m-1} \otimes b_m)
= \\ a_1\otimes  \x_1 \otimes \cdots\otimes  \x_{n-1} \otimes (a_n \otimes b_1) \otimes \y_1 \otimes \cdots \otimes \y_{m-1} \otimes b_m$.
\item {\bf Tensor units: } $I_O, I_A$\/ (tensor units of $\C$\/ regarded as one-letter words).
\end{description}

\subsection{Isotropy group}
We are now in a position to analyse the isotropy group of a strict monoidal category. By the results of the previous section, we know that an element of isotropy
of a strict monoidal category $\C$\/ may be taken to be of the form $(s_O,s_A)$, where $s_O$\/ and $s_A$\/ are closed terms in normal form of sort $O$\/ and $A$\/ 
respectively. 

The first observation is that elements of isotropy of the monoid $\ob(\C)$\/ induce elements of isotropy of $\C$. (As we shall see in the next section, this is not specific to
strict monoidal categories.) In what follows, we write $\Z(\C)$\/ for the isotropy group of a strict monoidal category $\C$\/, and $\Z_\mon(M)$\/ for the isotropy group of
a monoid $M$\/ (which is the group of invertible elements of $M$ by \cite[Example 4.3]{HPS}).

\begin{lemma}\label{lem:obtocat}
Let $\C \in \StrMonCat$. When $a$\/ is an invertible object in the monoid $\ob(\C)$\/ with inverse $b$, the pair 
$(a \otimes \x_O  \otimes b, \mathsf{id}(a) \otimes \x_A \otimes \mathsf{id}(b))$\/ is an element of $\Z(\C)$.
\end{lemma}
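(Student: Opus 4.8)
The plan is to invoke the syntactic characterization of Theorem~\ref{thm:QEiso}: writing $s_O = a \otimes \x_O \otimes b$ and $s_A = \mathsf{id}(a) \otimes \x_A \otimes \mathsf{id}(b)$, it suffices to check that the pair $(s_O, s_A) \in \C\la\bar{\x}\ra$ is invertible, commutes generically with every function symbol of $\T$, and reflects definedness. Throughout I would repeatedly use the hypothesis $a \otimes b = I_O = b \otimes a$, together with the axioms $\mathsf{id}(x \otimes y) = \mathsf{id}(x) \otimes \mathsf{id}(y)$ and $\mathsf{id}(I_O) = I_A$, which give $\mathsf{id}(a) \otimes \mathsf{id}(b) = \mathsf{id}(a \otimes b) = \mathsf{id}(I_O) = I_A$ and likewise $\mathsf{id}(b) \otimes \mathsf{id}(a) = I_A$.

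For invertibility I would propose $s_O^{-1} := b \otimes \x_O \otimes a$ and $s_A^{-1} := \mathsf{id}(b) \otimes \x_A \otimes \mathsf{id}(a)$. Substituting and collapsing the flanking factors via $a \otimes b = b \otimes a = I_O$ (respectively $\mathsf{id}(a) \otimes \mathsf{id}(b) = \mathsf{id}(b) \otimes \mathsf{id}(a) = I_A$) yields $s_O[s_O^{-1}/\x_O] = \x_O = s_O^{-1}[s_O/\x_O]$ and the analogous identities at sort $A$, so both components are invertible in their substitution monoids.

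For generic commutation I would treat the function symbols in two groups. The unary structural maps $\dom,\cod,\mathsf{id}$ are handled by the functoriality axioms $\dom(f \otimes g) = \dom(f) \otimes \dom(g)$, $\cod(f \otimes g) = \cod(f) \otimes \cod(g)$, $\mathsf{id}(x \otimes y) = \mathsf{id}(x) \otimes \mathsf{id}(y)$ together with $\dom(\mathsf{id}(x)) = x = \cod(\mathsf{id}(x))$; for instance $\dom(s_A[\x_1/\x_A]) = \dom(\mathsf{id}(a)) \otimes \dom(\x_1) \otimes \dom(\mathsf{id}(b)) = a \otimes \dom(\x_1) \otimes b = s_O[\dom(\x_1)/\x_O]$, and the cases of $\cod$ and $\mathsf{id}$ are identical in form. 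The binary operations are where invertibility of $a$ is essential: for $\otimes_O$ one computes $s_O[\x_1/\x_O] \otimes s_O[\x_2/\x_O] = a \otimes \x_1 \otimes (b \otimes a) \otimes \x_2 \otimes b = a \otimes \x_1 \otimes \x_2 \otimes b = s_O[\x_1 \otimes \x_2/\x_O]$, the middle $b \otimes a$ collapsing to $I_O$; the case of $\otimes_A$ is the same with $\mathsf{id}(b) \otimes \mathsf{id}(a) = I_A$. For composition, writing the operation as $\circ(-,-)$, one applies the interchange axiom $(f \otimes g) \circ (h \otimes k) = (f \circ h) \otimes (g \circ k)$ twice to rewrite $\circ\bigl(s_A[\x_1/\x_A], s_A[\x_2/\x_A]\bigr)$ as $\circ(\mathsf{id}(a),\mathsf{id}(a)) \otimes \circ(\x_1,\x_2) \otimes \circ(\mathsf{id}(b),\mathsf{id}(b))$ and then uses $\circ(\mathsf{id}(a),\mathsf{id}(a)) = \mathsf{id}(a)$ and $\circ(\mathsf{id}(b),\mathsf{id}(b)) = \mathsf{id}(b)$ to reach $\mathsf{id}(a) \otimes \circ(\x_1,\x_2) \otimes \mathsf{id}(b) = s_A[\circ(\x_1,\x_2)/\x_A]$; here one must verify that, under the hypothesis $\circ(\x_1,\x_2) \downarrow$, the intermediate composites demanded by the interchange law are all defined. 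Finally, commutation with the constants $I_O, I_A$ reduces to $s_O[I_O/\x_O] = a \otimes b = I_O$ and $s_A[I_A/\x_A] = \mathsf{id}(a) \otimes \mathsf{id}(b) = I_A$.

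For reflection of definedness, the only partial operation is $\circ$, so this is the sole case to check, the symbols $\dom,\cod,\mathsf{id},\otimes_O,\otimes_A$ being total. The hypothesis that $\circ\bigl(s_A[\x_1/\x_A], s_A[\x_2/\x_A]\bigr)$ is defined amounts, via the domain/codomain computations above, to an equation of the form $a \otimes u \otimes b = a \otimes v \otimes b$ between the relevant boundary objects; tensoring on the left by $b$ and on the right by $a$ and collapsing $b \otimes a = I_O$ recovers $u = v$, which is exactly the definedness condition for $\circ(\x_1,\x_2)$. I expect this last cancellation, and the definedness bookkeeping for the interchange law in the composition case, to be the only genuinely delicate points; everything else is a direct appeal to the monoidal and categorical axioms, with invertibility of $a$ entering precisely whenever a middle factor $b \otimes a$ or $\mathsf{id}(b) \otimes \mathsf{id}(a)$ must be cancelled.
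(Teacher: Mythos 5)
Your proof is correct, and all the computations check out: the inverses $b \otimes \x_O \otimes a$ and $\mathsf{id}(b) \otimes \x_A \otimes \mathsf{id}(a)$, the case analysis over the function symbols, and the cancellation argument for reflection of definedness (which is indeed the one place where invertibility of $a$ does real work beyond collapsing middle factors). However, it is not the route the paper actually takes. The paper mentions your route -- the term-by-term verification of the three conditions of Theorem~\ref{thm:QEiso} -- but explicitly defers the details to the thesis, judging it ``less work to show directly'' that the pair determines an extended inner automorphism: for each strict monoidal functor $F : \C \to \D$ one defines $\alpha_F(d) = Fa \otimes d \otimes Fb$ on objects and $\alpha_F(f) = \mathsf{id}(Fa) \otimes f \otimes \mathsf{id}(Fb)$ on arrows, and checks that each $\alpha_F$ is an automorphism of $\D$ and that the family is natural in $F$. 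The semantic route is shorter because it sidesteps all the partial-Horn bookkeeping about definedness of intermediate terms, which is precisely where your argument has to be most careful (the interchange-law composites and the reflection-of-definedness cancellation). Your syntactic route, on the other hand, stays entirely within the machinery of Theorem~\ref{thm:QEiso} and so exercises the paper's main general result directly; it is also the form of the argument that is reused later in the surjectivity half of Theorem~\ref{thm:picard}, where the normal-form analysis is unavoidable. Both are complete proofs; the difference is one of economy versus uniformity of method.
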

\begin{proof}
To show that $(a \otimes \x_O  \otimes b, \mathsf{id}(a) \otimes \x_A \otimes \mathsf{id}(b))$\/ is an element of isotropy, one can straightforwardly 
verify that it is invertible, commutes generically with all operations
of $\T$, and reflects definedness (for details, see \cite[Proposition 3.9.35]{thesis}). However, it is less work to show 
directly that given a strict monoidal functor
$F:\C \to \D$\/, we obtain an automorphism $\alpha_F$\/ of $\D$\/ as follows. 
On objects we set $\alpha_F(d)=Fa \otimes d \otimes Fb$, while on arrows we set $\alpha_F(f)=\mathsf{id}(Fa) \otimes f \otimes \mathsf{id}(Fb)$.
It is routine to check that this defines an automorphism and that the family $\alpha_F$\/ is natural in $F$.
\end{proof} 

The above lemma gives us a mapping $\theta_\C:\Z_{\mon}(\ob(\C)) \to \Z(\C)$. 
It is easily verified that this is in fact a group homomorphism, natural in $\C$. 

Next, we define a retraction $\sigma$\/ of $\theta$. 
This is done categorically using the right adjoint $\nabla$\/ to $\ob$. Concretely, given an element of isotropy
$\alpha \in \Z(\C)$, we define an element $\sigma_\C(\alpha) \in \Z_\mon(\ob(\C))$\/ as follows:
consider a monoid homomorphism $h:\ob(\C) \to N$. 
This corresponds by the adjunction $\ob \dashv \nabla$ to a strict monoidal functor $\tilde{h}: \C \to \nabla(N)$; the component of $\alpha$\/ at $\tilde{h}$\/ is an
automorphism of $\nabla(N)$, whence $\ob\left(\alpha_{\tilde{h}}\right)$\/ is an automorphism of $N$\/ (using the fact that $\ob \circ \nabla=1$). This leads to:

\begin{lemma}
If $\C \in \StrMonCat$, the map $\sigma_\C: \Z(\C) \to \Z_\mon(\ob(\C))$ is a group homomorphism. \qed
\end{lemma}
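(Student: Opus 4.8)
The plan is to read off the group-homomorphism property from the definition of $\sigma_\C$, using only that $\ob$ is a functor satisfying $\ob\circ\nabla=1$ and that the group operation in each isotropy group is vertical composition of natural automorphisms of the relevant projection functor, hence computed componentwise. Concretely, for $\alpha\in\Z(\C)$ the element $\sigma_\C(\alpha)\in\Z_\mon(\ob(\C))$ has, at a monoid homomorphism $h:\ob(\C)\to N$, the component $\ob(\alpha_{\tilde h})$, where $\tilde h:\C\to\nabla(N)$ is the transpose of $h$ under $\ob\dashv\nabla$.

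The substantive step I would carry out first is the \emph{well-definedness} of $\sigma_\C(\alpha)$, namely that the family $(\ob(\alpha_{\tilde h}))_h$ satisfies the coherence condition that makes it an element of $\Z_\mon(\ob(\C))$. Given a morphism $k:N\to N'$ of the coslice $\ob(\C)/\mon$, so that $k\circ h=h'$, naturality of the adjunction transpose for $\ob\dashv\nabla$ gives $\tilde{h'}=\nabla(k)\circ\tilde h$, so that $\nabla(k):\nabla(N)\to\nabla(N')$ is a morphism $\tilde h\to\tilde{h'}$ in $\C/\StrMonCat$. Applying the naturality of $\alpha$ along this morphism yields $\alpha_{\tilde{h'}}\circ\nabla(k)=\nabla(k)\circ\alpha_{\tilde h}$; applying $\ob$ and using $\ob(\nabla(k))=k$ then gives $\ob(\alpha_{\tilde{h'}})\circ k=k\circ\ob(\alpha_{\tilde h})$, which is exactly the required coherence condition for $\sigma_\C(\alpha)$. (That each $\ob(\alpha_{\tilde h})$ is an automorphism, not a mere endomorphism, is already noted before the lemma, since $\ob$ preserves isomorphisms and $\alpha_{\tilde h}$ is invertible.)

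With well-definedness in place, the homomorphism property is immediate from functoriality of $\ob$. For $\alpha,\beta\in\Z(\C)$ and any $h:\ob(\C)\to N$, the component of $\sigma_\C(\alpha\beta)$ at $h$ is $\ob\!\left((\alpha\beta)_{\tilde h}\right)=\ob\!\left(\alpha_{\tilde h}\circ\beta_{\tilde h}\right)=\ob(\alpha_{\tilde h})\circ\ob(\beta_{\tilde h})$, which is precisely the component of $\sigma_\C(\alpha)\,\sigma_\C(\beta)$ at $h$; and the component of $\sigma_\C(1)$ at $h$ is $\ob(1_{\nabla(N)})=1_N$. Since the group operations on both sides are componentwise, this shows $\sigma_\C(\alpha\beta)=\sigma_\C(\alpha)\,\sigma_\C(\beta)$ and $\sigma_\C(1)=1$.

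I expect the only real obstacle to be the naturality bookkeeping in the well-definedness step — in particular, correctly identifying $\tilde{h'}=\nabla(k)\circ\tilde h$ and recognizing it as a coslice morphism along which $\alpha$ is natural. Once that is set up, preservation of composition and identity is a one-line consequence of $\ob$ being a functor, requiring no further calculation.
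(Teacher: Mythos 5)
Your argument is correct and follows exactly the route the paper intends: the paper defines $\sigma_\C(\alpha)$ componentwise via the adjunction $\ob \dashv \nabla$ and leaves the verification as routine, and your naturality check (using $\tilde{h'}=\nabla(k)\circ\tilde h$ and $\ob\circ\nabla=1$) together with the componentwise homomorphism property is precisely the omitted computation.
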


Interpreting this syntactically, we find that if $(s_O, s_A) \in \Z(\C)$, then 
$s_O \in \Z_\mon(\ob(\C))$, and hence $s_O=a \otimes \x_O \otimes b$\/ for an invertible 
object $a$ with inverse $b$. We also see that $\sigma_\C$\/ is a retraction of $\theta_\C$, i.e. that $\sigma_\C \circ \theta_\C = 1$. 

Since $\theta_\C$ is a section, it now remains to show that $\theta_\C$ is an epimorphism of groups, i.e. is surjective. So we must show for any element of isotropy
$(s_O,s_A)=(a \otimes \x_O \otimes b,s_A) \in \Z(\C)$ (with invertible object $a$ and inverse $b$) that we have $s_A=\mathsf{id}(a) \otimes \x_A \otimes \mathsf{id}(b)$.  
To this end, we first note that since $(s_O,s_A)$\/ commutes generically
with the operations $\dom$ and $\cod$\/ we get
\[ a \otimes \dom(\x_A) \otimes b = s_O[\dom(\x_A)/\x_O] = \dom(s_A)\]
and likewise
\[  a \otimes \cod(\x_A) \otimes b = s_O[\cod(\x_A)/\x_O] = \cod(s_A).\]
Thus, by uniqueness of normal forms, $s_A$\/ must have the form $f \otimes \x_A \otimes g$\/ for some morphisms $f:a \to a$\/ and $g:b \to b$\/ of $\C$.
So we must now show that $f=\mathsf{id}(a)$\/ and $g=\mathsf{id}(b)$, 
and for that we use the fact that $(s_O,s_A)$\/ commutes generically with $\mathsf{id}$, giving
\[ f \tensor \mathsf{id}(\x_O) \tensor g = s_A[\mathsf{id}(\x_O)/\x_A] = \mathsf{id}(s_O) = \mathsf{id}(a \otimes \x_O \otimes b) = \mathsf{id}(a) \otimes \mathsf{id}(\x_O) \otimes \mathsf{id}(b).\]
We now get the desired equalities $f=\mathsf{id}(a)$\/ and $g=\mathsf{id}(b)$\/ by appealing to the uniqueness of normal forms. 
This concludes the proof of Theorem~\ref{thm:picard}.

\section{Further examples and applications}
In this section we briefly explore some further theories of interest,
and indicate the extent to which the analysis of the case of strict monoidal categories can be generalized.

\subsection{Internal categories}
The analysis of strict monoidal categories reveals that it is profitable, at least for the purposes of understanding isotropy,
to regard strict monoidal categories as internal categories in the category $\mon$ of monoids. This naturally raises the following question:
are there other algebraic theories $\T$\/ for which the forgetful functor $\ob: \cat(\Tmod) \to \Tmod$\/ induces an isomorphism on the level of
isotropy groups? 

Let us first state which of the ideas from the case of monoids carry over to a general algebraic theory $\T$.
First of all, we still have a string of adjunctions
\[
\xymatrix{
\cat(\Tmod) \ar[rrr]|-{\ob}^\perp_\perp &&& \Tmod \ar@/_2ex/[lll]_{\Delta} \ar@/^2ex/[lll]^\nabla \\
}
\]
with $\ob \circ \nabla  \cong 1 \cong \ob \circ \Delta$. This allows us to deduce the existence of
a pair of natural comparison homomorphisms
\[ \theta_\C: \Z_\T(\ob(\C)) \to \Z(\C) \; ; \qquad \sigma_\C:\Z(\C) \to \Z_\T(\ob(\C))\] with 
$\sigma \circ \theta = 1$ (here $\Z$\/ denotes the isotropy of $\cat(\Tmod)$ and $\Z_\T$\/ that of $\Tmod$).
We thus have:

\begin{lemma}
Let $\T$\/ be any algebraic theory and $\C$\/ any internal category in $\Tmod$. Then $\Z_\T(\ob(\C))$\/ is a retract of $\Z(\C)$, naturally in $\C$.
\end{lemma}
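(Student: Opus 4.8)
The plan is to exhibit two natural transformations $\theta\colon \Z_\T\circ\ob \Rightarrow \Z$ and $\sigma\colon \Z \Rightarrow \Z_\T\circ\ob$ of functors $\cat(\Tmod)\to\grp$, and then to verify the single identity $\sigma\circ\theta = 1$; this exhibits $\Z_\T(\ob(\C))$ as a natural retract of $\Z(\C)$. Throughout I work from the abstract definition of isotropy and its defining naturality condition $\gamma_{n\circ k}\circ n = n\circ\gamma_k$ (diagram (b)), for $k$ a homomorphism out of the base and $n$ arbitrary. For $\sigma$ I reuse verbatim the construction using the right adjoint $\nabla$: given $\alpha\in\Z(\C)$ and a homomorphism $h\colon\ob(\C)\to N$, transpose $h$ across $\ob\dashv\nabla$ to a morphism $\tilde h\colon\C\to\nabla N$ in $\cat(\Tmod)$, form the component $\alpha_{\tilde h}\in\aut(\nabla N)$, and set $\sigma_\C(\alpha)_h := \epsilon_N\circ\ob(\alpha_{\tilde h})\circ\epsilon_N^{-1}$, where $\epsilon_N\colon\ob(\nabla N)\to N$ is the counit isomorphism coming from $\ob\circ\nabla\cong 1$. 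That $\sigma_\C(\alpha)$ is natural in $h$, that $\sigma_\C$ is a group homomorphism, and that it is natural in $\C$ all reduce to naturality of the transpose and of the counit, so these are routine.

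For $\theta$ I use the left adjoint $\Delta$. Given $\gamma\in\Z_\T(\ob(\C))$ and a morphism $F\colon\C\to\D$ in $\cat(\Tmod)$, I define an internal endofunctor of $\D$ whose object-component is $\gamma_{\ob(F)}\in\aut(\ob(\D))$ and whose arrow-component is $\gamma_{i_\D\circ\ob(F)}\in\aut(\arr(\D))$, where $i_\D\colon\ob(\D)\to\arr(\D)$ is the internal identity map; conceptually this is the transport, along the counit $\Delta(\ob(\C))\to\C$ of $\Delta\dashv\ob$, of the evident lift of $\gamma$ to $\Z(\Delta(\ob(\C)))$, which is unproblematic since $\Delta(\ob(\C))$ is discrete. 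The crux of the whole argument is to check that this pair really is an internal functor, i.e.\ that it commutes with $d_\D, c_\D, i_\D$ and with the composition map $m_\D$. Compatibility with $d_\D, c_\D, i_\D$ is immediate from naturality (b) for $\gamma$ together with the identities $d_\D\circ i_\D = c_\D\circ i_\D = 1$. The main obstacle is compatibility with $m_\D$, since this lives over the pullback $\arr(\D)\times_{\ob(\D)}\arr(\D)$: here I would apply naturality (b) to the homomorphism $m_\D$ itself, precomposed with the diagonal of identities $\langle i_\D\ob(F), i_\D\ob(F)\rangle\colon\ob(\C)\to\arr(\D)\times_{\ob(\D)}\arr(\D)$ (well-defined because $d_\D$ and $c_\D$ agree on $i_\D\ob(F)$), using $m_\D\circ\langle i_\D\ob(F),i_\D\ob(F)\rangle = i_\D\ob(F)$ and the two projections to identify both sides. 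Invertibility of the two components is inherited from that of $\gamma$, so each $\theta_\C(\gamma)_F$ is an internal automorphism; naturality in $F$ and in $\C$, and the homomorphism property of $\theta_\C$, again reduce to naturality (b).

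Finally I verify the retraction identity $\sigma_\C\circ\theta_\C = 1$. Unwinding the definitions, $\sigma_\C(\theta_\C(\gamma))_h$ is the object-component of $\theta_\C(\gamma)_{\tilde h}$ transported along $\epsilon_N$, that is $\epsilon_N\circ\gamma_{\ob(\tilde h)}\circ\epsilon_N^{-1}$, where $\ob(\tilde h)\colon\ob(\C)\to\ob(\nabla N)$. Since $h$ factors as $h = \epsilon_N\circ\ob(\tilde h)$, one application of naturality (b) for $\gamma$ to the homomorphism $\epsilon_N$ yields $\epsilon_N\circ\gamma_{\ob(\tilde h)} = \gamma_h\circ\epsilon_N$, whence $\sigma_\C(\theta_\C(\gamma))_h = \gamma_h$ for every $h$, and therefore $\sigma_\C\circ\theta_\C = 1$. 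I expect essentially all the genuine content to sit in the $m_\D$-compatibility of the previous paragraph; everything else is bookkeeping with the naturality of isotropy and of the two adjunctions $\Delta\dashv\ob\dashv\nabla$.
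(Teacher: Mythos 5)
Your proposal is correct and follows essentially the same route as the paper: the paper also obtains $\theta$ and $\sigma$ from the adjoint string $\Delta \dashv \ob \dashv \nabla$ (with $\ob\circ\nabla \cong 1 \cong \ob\circ\Delta$) and checks $\sigma\circ\theta = 1$, and your explicit arrow-component $\gamma_{i_\D\circ\ob(F)}$ is exactly the general form of the paper's construction $\alpha_F(f)=\mathsf{id}(Fa)\otimes f\otimes\mathsf{id}(Fb)$ in the monoidal case. You merely supply details (notably the $m_\D$-compatibility via the pullback and the diagonal of identities) that the paper leaves implicit, and these details are right.
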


\noindent In the case of strict monoidal categories, we were able to prove syntactically that the embedding-retraction pair $(\theta,\sigma)$\/ 
is an isomorphism. The same proof can also be applied in two other cases of interest:

\begin{proposition}
The isotropy group of a crossed module $A \to G$\/ is isomorphic to $G$. \qed 
\end{proposition}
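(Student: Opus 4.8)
The plan is to treat crossed modules in exact parallel with strict monoidal categories, using the classical equivalence $\xmod \simeq \cat(\grp)$ between crossed modules and internal categories in groups (recall that every internal category in $\grp$ is automatically an internal groupoid). Under this equivalence a crossed module $\partial : A \to G$ corresponds to an internal category $\C$ in $\grp$ whose object group is $\ob(\C) = G$ and whose arrow group is the semidirect product $A \rtimes G$, with $\dom$, $\cod$, $\mathsf{id}$, and composition given in the usual way. Since $\xmod = \cat(\grp)$ is (the model category of) an algebraic theory, Theorem~\ref{thm:QEiso} applies, and it suffices to compute $\Z(\C)$ and to show it is $G$.

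First I would invoke the preceding Lemma: since $\grp$ is an algebraic theory and $\C$ is an internal category in it, the comparison pair $(\theta_\C, \sigma_\C)$ exhibits $\Z_\grp(\ob(\C)) = \Z_\grp(G)$ as a retract of $\Z(\C)$, naturally in $\C$, with $\sigma_\C \circ \theta_\C = 1$. By Bergman's theorem (recalled in the introduction), the covariant isotropy group of a group $G$ is $G$ itself, an extended inner automorphism being conjugation by a uniquely determined element of $G$; hence $\Z_\grp(G) \cong G$. The proposition therefore reduces to proving that the section $\theta_\C$ is surjective.

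For surjectivity I would reproduce the syntactic argument of Section~\ref{sec:picard} nearly verbatim, with the internal group multiplication playing the role of $\otimes$. The first step is to establish normal-form descriptions of $\C\la \x_O \ra$ and $\C\la \x_A \ra$ analogous to those obtained there: adjoining an indeterminate object replaces the object group $G$ by the free product $G \ast \mathbb{Z}$ (as $\ob$ preserves colimits and carries the free internal category on one object to the free group on one generator), while adjoining an indeterminate arrow yields alternating words in the arrows of $\C$ and the symbols $\x_A$, $\mathsf{id}(\dom(\x_A))$, $\mathsf{id}(\cod(\x_A))$. By Theorem~\ref{thm:QEiso} an element of isotropy of $\C$ is a pair $(s_O, s_A)$ of closed terms in normal form; applying $\sigma_\C$ shows $s_O \in \Z_\grp(G)$, so $s_O = g \cdot \x_O \cdot g^{-1}$ for a unique $g \in G$. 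Commuting generically with $\dom$ and $\cod$, together with uniqueness of normal forms, then forces $s_A = u \cdot \x_A \cdot v$ with $u$ an endomorphism of the object $g$ and $v$ an endomorphism of $g^{-1}$; finally, commuting generically with $\mathsf{id}$ and again invoking uniqueness of normal forms forces $u = \mathsf{id}(g)$ and $v = \mathsf{id}(g)^{-1} = \mathsf{id}(g^{-1})$. Hence $(s_O, s_A) = \theta_\C(g)$, so $\theta_\C$ is surjective and $\Z(\C) \cong \Z_\grp(G) \cong G$.

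The main obstacle is the normal-form analysis for adjoining indeterminates to an internal category in groups; once those descriptions are in place, the syntactic pin-down of $(s_O, s_A)$ is formally identical to the strict monoidal case, and no genuinely new ideas are required.
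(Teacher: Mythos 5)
Your proposal is correct and follows the paper's own route: the paper likewise identifies $\xmod$ with $\cat(\grp)$ so that $\ob$ becomes the forgetful functor $(A \to G) \mapsto G$, invokes the retraction lemma together with the fact that the covariant isotropy group of a group is the group itself, and then asserts that the syntactic surjectivity argument from the strict monoidal case "can also be applied" verbatim. You have simply spelled out the details (normal forms for $G * \mathbb{Z}$ and for the arrow group, and the pin-down of $(s_O, s_A)$ via $\dom$, $\cod$, $\mathsf{id}$) that the paper leaves implicit.
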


\begin{proof}
When composing the functor $\ob:\cat(\grp) \to \grp$\/ with the equivalence $\xmod \xrightarrow{\sim} \cat(\grp)$, one
obtains the forgetful functor which sends a crossed module $A \to G$\/ to $G$. Moreover, the isotropy group of 
a group $G$\/ is $G$\/ itself by \cite[Example 4.1]{HPS}.
\end{proof}

\begin{proposition}
The isotropy group of a strict \emph{symmetric} monoidal category is trivial.
\end{proposition}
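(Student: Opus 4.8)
The plan is to re-run the argument for Theorem~\ref{thm:picard} and then exploit the one extra operation that the symmetric theory provides. Axiomatize the symmetry as a total operation $\gamma : O \times O \to A$ with $\dom \gamma(x,y) = x \otimes y$ and $\cod \gamma(x,y) = y \otimes x$, subject to naturality, the two hexagon laws, the involution law $\gamma(y,x) \circ \gamma(x,y) = \id(x \otimes y)$, and $\gamma(x,I) = \id(x) = \gamma(I,x)$. Let $(s_O, s_A) \in \Z(\C)$ be an element of isotropy. Adjoining an indeterminate object and the object-level operations $\otimes_O, I_O$ are insensitive to the presence of $\gamma$ (the symmetry lives among the arrows), so the analysis of the $O$-sort is verbatim that of Section~\ref{sec:picard}: commuting generically with $\otimes_O$ and $I_O$ forces $s_O$ to be an invertible element of the monoid $\ob(\C)$, i.e. $s_O = a \otimes \x_O \otimes b$ for a strictly invertible object $a$ with inverse $b$. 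Commuting generically with $\dom, \cod, \id$ then constrains $s_A$ to be an arrow $a \otimes \dom(\x_A) \otimes b \to a \otimes \cod(\x_A) \otimes b$ which collapses to $\id(a) \otimes \id(\x_O) \otimes \id(b)$ upon substituting $\id(\x_O)$ for $\x_A$; as in Theorem~\ref{thm:picard} this pins down $s_A = \id(a) \otimes \x_A \otimes \id(b)$. Thus every element of isotropy is again ``conjugation by an invertible object'', and the remaining task is to show that the symmetry admits only the trivial such conjugation.

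The decisive new constraint is that $(s_O, s_A)$ must commute generically with $\gamma$. Substituting the forms just obtained into Definition~\ref{commutesgenericallydefn} yields the requirement that the sequent
\[ \id(a) \otimes \gamma(\x_1, \x_2) \otimes \id(b) = \gamma\bigl(a \otimes \x_1 \otimes b,\ a \otimes \x_2 \otimes b\bigr) \]
be provable in $\T(\C, \x_1, \x_2)$, i.e. that the two morphisms agree in the strict symmetric monoidal category $\C\la \x_1, \x_2 \ra$ obtained by adjoining two indeterminate objects. Using $b \otimes a = I$, both sides are isomorphisms $a \otimes \x_1 \otimes \x_2 \otimes b \to a \otimes \x_2 \otimes \x_1 \otimes b$ inducing the same permutation of the surviving strands. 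I would show that this identity can hold only when $a = I$; then $s_O = \x_O$ and consequently $s_A = \id(I) \otimes \x_A \otimes \id(I) = \x_A$ (using $\id(I_O) = I_A$ and the unit axioms), so that $(s_O, s_A)$ is the identity and $\Z(\C)$ is trivial.

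To establish this crux I would expand the right-hand block symmetry, by repeated use of the hexagon laws and bifunctoriality of $\otimes$, into a composite of elementary crossings: in $\gamma(a \otimes \x_1 \otimes b,\ a \otimes \x_2 \otimes b)$ the atoms $a$ and $b$ of the first block are dragged past $\x_2$ (and symmetrically the second block's $a,b$ past $\x_1$), whereas the left-hand side crosses only $\x_1$ with $\x_2$ and leaves the $a$- and $b$-strands fixed. I would then compare the two arrows using the explicit normal form for the morphisms of $\C\la \x_1, \x_2 \ra$ — the symmetric-monoidal analogue of the normal-form description built in Section~\ref{sec:picard} — and read off that the block symmetry does not reduce to $\id(a) \otimes \gamma(\x_1, \x_2) \otimes \id(b)$ unless there are no $a$- and $b$-strands to drag, that is, unless $a = I$. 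This is consonant with the conceptual picture of this section: strict symmetric monoidal categories sit above monoids via $\ob$, and the symmetry is precisely what prevents the embedding $\theta_\C : \Z_{\mon}(\ob(\C)) \to \Z(\C)$ of conjugations from being well defined for a nontrivial object, collapsing the retract to the trivial group.

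I expect the morphism comparison to be the main obstacle. Because the two sides induce the same underlying permutation of the surviving strands, one cannot invoke the naive coherence theorem for symmetric monoidal categories; the argument must genuinely use the object relation $b \otimes a = I$ and track how the block symmetry interleaves the $a$- and $b$-strands with the indeterminates $\x_1, \x_2$. The subtle point is that a strictly invertible object braids ``transparently'' (its double braiding with anything is trivial), so the distinction between the two morphisms is not visible to any grading- or sign-type invariant and instead lives in the fine structure of how the invertible strands are routed; making this rigorous is exactly where the careful normal-form analysis of $\C\la \x_1, \x_2 \ra$ — or an equivalent direct verification that the section $\theta_\C$ fails to extend the conjugation by $a$ to the generic extension unless $a = I$ — is required.
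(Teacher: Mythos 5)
The paper's proof of this proposition is two lines long and takes a completely different route from yours: it reads ``strict symmetric monoidal category'' as a commutative monoid object in $\cat$, i.e.\ an internal category in the category of \emph{commutative} monoids (so that $x\otimes y = y\otimes x$ on objects and on arrows, and the symmetry is the identity). Under that reading the embedding--retraction pair $(\theta_\C,\sigma_\C)$ of the preceding subsection applies verbatim with $\mon$ replaced by commutative monoids, the same syntactic surjectivity argument shows it is an isomorphism, and the result then follows immediately from the cited fact that the covariant isotropy group of a commutative monoid is trivial. In particular no analysis of a symmetry operation is needed, and there are no nontrivial candidate elements at the object level to rule out, since conjugation in a commutative monoid is already generically trivial.

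You instead axiomatize a genuine symmetry $\gamma : O\times O\to A$ and try to show that commuting generically with $\gamma$ kills conjugation by any non-unit invertible object. That is a legitimate (and arguably more interesting) reading of the statement, but as written the argument has two genuine gaps. First, the normal-form and uniqueness results you import from Section~\ref{sec:picard} --- which are what let you conclude $s_O = a\otimes\x_O\otimes b$ and then $s_A = \mathsf{id}(a)\otimes\x_A\otimes\mathsf{id}(b)$ --- are proved there for the theory \emph{without} $\gamma$; adding $\gamma$ enlarges the arrow part of $\C\la\x_O\ra$ and $\C\la\x_A\ra$ with new generators $\gamma(t,s)$ involving the indeterminates, so both the description of these free extensions and the uniqueness of normal forms must be re-established before the first half of your argument is available. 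Second, and more seriously, the crux --- that $\mathsf{id}(a)\otimes\gamma(\x_1,\x_2)\otimes\mathsf{id}(b) = \gamma(a\otimes\x_1\otimes b,\ a\otimes\x_2\otimes b)$ in $\C\la\x_1,\x_2\ra$ forces $a=I$ --- is only announced (``I would show that\dots''), and you yourself observe that the obvious permutation and coherence invariants cannot distinguish the two sides, precisely because strictly invertible objects can braid transparently. Whether the required inequality actually holds for every $\C$ and every non-unit invertible $a$ (for instance when $\C$ itself already trivializes all symmetries involving $a$) is exactly the content of the proposition under your reading, and the proposal does not settle it. So the proof is incomplete where it matters, and it is proving a different statement from the one the paper establishes.
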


\begin{proof}
The isotropy group of commutative monoids is trivial by \cite[Example 4.4]{HPS}.
\end{proof}

\subsection{Presheaf categories}\label{subsec:psh}
Using Theorem \ref{thm:QEiso}, we can also compute the covariant 
isotropy of any presheaf category $\set^\J$\/ for a small category $\J$. 
We first axiomatize $\set^\J$\/ as a quasi-equational theory.

\begin{definition}[\textbf{Presheaf Theory}]
{\em Let $\J$ be a small category. We define the signature $\Sigma_\J$\/  to have one sort 
$X_i$ for each $i \in \ob(\J)$ and one function symbol $\alpha_f : X_i \to X_j$ for each arrow $f : i \to j$ in $\J$. 

We define the \emph{presheaf theory} $\TJ$ to be the quasi-equational theory over the signature $\Sigma_\J$ with the following axioms:
\begin{itemize}
\item $\top \vdash^{x : X_i} \alpha_f(x) \downarrow$ for any $f : i \to j$ in $\J$ (i.e. each $\alpha_f$ is \emph{total}). 
\item $\top \vdash^{x : X_i} \alpha_{\mathsf{id}_i}(x) = x$ for every $i \in \ob\J$ (i.e. each $\alpha_{\mathsf{id}_i}$ acts as an identity). 
\item $\top \vdash^{x : X_i} \alpha_g(\alpha_f(x)) = \alpha_{g \circ f}(x)$ for any composable pair $i \xrightarrow{f} j \xrightarrow{g} k$ in $\J$. \qed 
\end{itemize}
}
\end{definition}

\noindent We will lighten notation and write $i$\/ instead of $X_i$ and $f$\/ instead of $\alpha_f$. 
We write $\x_i$\/ for an indeterminate of sort $i$. It is completely straightforward to verify 
that we have an isomorphism of categories $\TJmod \cong \set^\J$ (for details, see \cite[Proposition 5.1.8]{thesis}). 
So to compute the covariant isotropy group $\Z_{\set^\J} : \set^\J \to \grp$\/ of the category $\set^\J$, it
is equivalent to compute the covariant isotropy group $\Z_{\TJ} : \TJmod \to \grp$\/ of the theory $\TJ$.

According to Theorem~\ref{thm:QEiso}, we have
for a $\TJ$-model (i.e. functor) $F:\J \to \set$ that
\[ \Z(F)\cong \left\{[s_i] _{i} \in \prod_{i \in \J} F\la \x_i \ra_i \mid [s_i]_i \text{ is invertible and commutes gen. with all $f:i \to j$\/} \right\}.\]
Note that since all terms are provably total in $\TJ$, the condition that $[s_i]_i$\/ reflects definedness can be omitted. 
We now require the following preparatory lemma.

\begin{lemma}
\label{arrowequality}
Let $M \in \TJmod$. If $f, f' : i \to j$ are parallel arrows in $\J$ and $\TJ(M, \x_i) \vdash f(\x_i) = f'(\x_i)$, then $f = f'$. 
\end{lemma}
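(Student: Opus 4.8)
The plan is to show that if two parallel arrows $f, f' : i \to j$ in $\J$ become equal after applying them to the indeterminate $\x_i$ (provably in the theory $\TJ(M, \x_i)$), then they must already be equal in $\J$. The key idea is to exhibit a particular $\TJ$-model $N$ together with an element $n \in N_i$ that \emph{separates} $f$ from $f'$ whenever $f \neq f'$; since the provable equality $f(\x_i) = f'(\x_i)$ persists under any substitution of a genuine element for $\x_i$ in any model, this will force $f(n) = f'(n)$, yielding a contradiction.

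The natural candidate for $N$ is the representable presheaf $\J(i, -) : \J \to \set$, which is a $\TJmod$-model under the identification $\TJmod \cong \set^\J$. First I would take $n := \mathsf{id}_i \in \J(i,i) = N_i$. Under the interpretation in $N$, the operation $f$ acts on $N_i$ by post-composition, so $f^N(\mathsf{id}_i) = f \circ \mathsf{id}_i = f \in \J(i,j) = N_j$, and similarly $f'^N(\mathsf{id}_i) = f'$. Now, the element $\mathsf{id}_i \in N_i$ corresponds by the universal property of $M\la\x_i\ra$ (equivalently, by freely adjoining an indeterminate) to a homomorphism $M\la \x_i \ra \to N$ sending $\x_i \mapsto \mathsf{id}_i$; more carefully, I would use that a provable equation in $\TJ(M,\x_i)$ between closed terms of sort $j$ is satisfied in every $\TJ(M)$-model equipped with a chosen interpretation of $\x_i$, i.e. in every $\TJmod$-model under $M$ together with an element of sort $i$. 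Since I only need to evaluate the two terms $f(\x_i)$ and $f'(\x_i)$, which involve no constants from $M$, it suffices to have \emph{any} $\TJmod$-model $N$ and element $n \in N_i$; I may take $N$ to be the representable $\J(i,-)$ (ignoring the constants from $M$, or equivalently mapping them arbitrarily). By the Yoneda-style calculation above, $f^N(n) = f$ and $f'^N(n) = f'$ as elements of $\J(i,j)$, and the hypothesis $\TJ(M,\x_i) \vdash f(\x_i) = f'(\x_i)$ forces $f^N(n) = f'^N(n)$, hence $f = f'$.

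The step I expect to require the most care is the transfer from provable equality in the syntactic theory $\TJ(M, \x_i)$ to evaluation in the chosen model: I must justify that the provable identity $f(\x_i) = f'(\x_i)$ of closed terms, together with soundness of partial Horn logic, yields equality of their interpretations once $\x_i$ is assigned a specific value $n \in N_i$ in a specific $\TJmod$-model. This is essentially the observation that the closed terms $f(\x_i)$ and $f'(\x_i)$ are interpreted via the substitution $\x_i \mapsto n$, and that every $\TJ$-provable equation holds in every model by soundness. Since all operations in $\TJ$ are total (as noted in the excerpt, definedness conditions are vacuous), there is no subtlety about partiality here. The remaining verification that $f^N$ and $f'^N$ act as post-composition on the representable presheaf is the defining property of the Yoneda embedding and is routine.
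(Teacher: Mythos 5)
Your overall strategy --- evaluate $f(\x_i)$ and $f'(\x_i)$ at the element $\mathsf{id}_i$ of a representable presheaf and appeal to soundness --- is the same as the paper's, but there is a genuine gap at exactly the step you flagged as delicate. The hypothesis is provability in $\TJ(M,\x_i)$, and soundness only transfers the equation to \emph{models of} $\TJ(M,\x_i)$, i.e.\ to $\TJ$-models $N$ equipped with an interpretation of the constants $\overline{a}$ satisfying the diagram axioms $\overline{f(a)} = f(\overline{a})$ --- which forces that interpretation to be a homomorphism (natural transformation) $M \to N$ --- together with a chosen element of $N_i$. Your remark that the two terms ``involve no constants from $M$, so any model will do'' conflates the language in which the terms are written with the theory in which the derivation lives: the proof of $f(\x_i)=f'(\x_i)$ may use the diagram axioms of $M$, and the extension from $\TJ$ plus an indeterminate to $\TJ(M,\x_i)$ is not a priori conservative. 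Likewise ``mapping the constants arbitrarily'' into $\J(i,-)$ does not produce a $\TJ(M)$-model, and in general no homomorphism $M \to \J(i,-)$ exists at all: take $\J$ to be a nontrivial group $G$ viewed as a one-object category and $M$ the one-point $G$-set; there is no equivariant map from the point to $G$, since $G$ acting on itself by translation has no fixed point.

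The repair is small and is exactly what the paper does: evaluate in $N := M + \J(i,-)$, with the coproduct inclusion $\eta : M \to N$ interpreting the constants from $M$ and $\mathsf{id}_i$ (living in the representable summand) interpreting $\x_i$. The induced map $[\eta,\mathsf{id}_i] : M\la \x_i \ra \to N$ exhibits $N$ as a genuine model of $\TJ(M,\x_i)$, so soundness yields $N(f)(\mathsf{id}_i) = N(f')(\mathsf{id}_i)$, and the Yoneda-style computation you describe (unaffected by the extra summand, since $\mathsf{id}_i$ lies in the representable part) gives $f = f \circ \mathsf{id}_i = f' \circ \mathsf{id}_i = f'$.
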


\begin{proof}
Note that the assumption $\TJ(M, \x_i) \vdash f(\x_i) = f'(\x_i)$\/ implies that for any homomorphism (i.e. natural transformation) $\eta: M \to N$\/ we have
$N(f)=N(f')$, since given any $a \in N_i$\/ there is a homomorphism $[\eta,a]:M\la \x_i \ra \to N$\/ sending $\x_i$\/ to $a$ (cf. also \cite[Lemma 3.1.2]{thesis}). 
We take $N:\J \to \set$\/ to be $N:=M+\J(i,-)$\/ and $\eta$\/ to be the coproduct inclusion. Then
$f=f \circ \mathsf{id}(i) = N(f)(\mathsf{id}(i))=N(f')(\mathsf{id}(i))=f' \circ \mathsf{id}(i)=f'$\/ as required.
\end{proof}

\noindent As a consequence of this lemma, we find that any term congruence class $[t] \in \TJ(M,\x_i)$\/ has a unique representation as $t\equiv a$\/ for some $a \in M_j$\/ 
or $t\equiv f(\x_i)$\/ for some $f$ with domain $i$, depending on whether the indeterminate $\x_i$\/ occurs in $t$. 

Let $\Aut(\id_\J)$ be the group of natural automorphisms of the identity functor $\id_\J : \J \to \J$ of a small category $\J$. 
This group is sometimes called the \emph{center} of $\J$.
We now have: 
\begin{proposition}
\label{presheafisotropyprop}
Let $\J$ be a small category. For any $M \in \TJmod$ we have
\[ \Z(M) = \left\{ \left([\psi_i(\x_i)]\right)_i \in \prod_{i \in \J} M\la \x_i \ra_i \colon \psi \in \Aut(\id_\J)\right\}. \]
\end{proposition}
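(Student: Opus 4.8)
The plan is to establish the displayed equality by showing two inclusions, exploiting the normal form for elements of $M\la \x_i \ra_i$ furnished by Lemma~\ref{arrowequality}. By that lemma (and the remark following it), every congruence class in $M\la \x_i \ra_i$ that involves the indeterminate $\x_i$ is of the form $[f(\x_i)]$ for a \emph{unique} arrow $f$ with domain $i$ and codomain $i$; classes not involving $\x_i$ are of the form $[a]$ for $a \in M_i$. The key structural observation is that such a constant class $[a]$ induces the \emph{constant} endofunction of $M_i$ (under the interpretation $t \mapsto t^M$), which cannot be invertible unless $M_i$ is a singleton; so an invertible element $[s_i]$ must have $s_i \equiv f_i(\x_i)$ for some endoarrow $f_i : i \to i$ in $\J$. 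Thus an arbitrary invertible family is automatically of the shape $\left([f_i(\x_i)]\right)_i$ with $f_i \in \J(i,i)$, and the entire content of the Proposition is to identify precisely when such a family lies in $\Z(M)$.

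First I would analyze the generic-commutativity condition. For the family $\left([f_i(\x_i)]\right)_i$, commuting generically with an arrow $g : i \to j$ of $\J$ means that $\TJ(M, \x_i) \vdash s_j[g(\x_i)/\x_j] = g(s_i[\x_i/\x_i])$, i.e.
\[ \TJ(M, \x_i) \vdash f_j(g(\x_i)) = g(f_i(\x_i)). \]
Using the composition axioms of $\TJ$, the left side is $(f_j \circ g)(\x_i)$ and the right side is $(g \circ f_i)(\x_i)$; applying Lemma~\ref{arrowequality} to these parallel arrows $i \to j$ yields $f_j \circ g = g \circ f_i$ in $\J$. This is exactly the naturality square asserting that the family $(f_i)_i$ is a natural transformation $\psi : \id_\J \Rightarrow \id_\J$. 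Hence generic commutativity with all $g$ is equivalent to $\psi := (f_i)_i \in \mathsf{End}(\id_\J)$, the monoid of natural endomorphisms of the identity functor.

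Next I would handle invertibility. Invertibility of $\left([f_i(\x_i)]\right)_i$ in the product monoid $M\la \bar{\x}\ra$ means that at each sort $i$ there is $[f_i(\x_i)]^{-1} \in M\la \x_i \ra_i$ inverse to it under substitution; since substitution of $[f_i(\x_i)]$ into itself corresponds to composition, the normal-form inverse must again be of the form $[f_i'(\x_i)]$ with $f_i \circ f_i' = \id_i = f_i' \circ f_i$ in $\J$ (invoking Lemma~\ref{arrowequality} once more to read off these equalities from the provable term identities $\top \vdash f_i(f_i'(\x_i)) = \x_i$, etc.). Combined with the previous paragraph, $\psi = (f_i)_i$ is then a natural transformation each of whose components is invertible, which is precisely the condition that $\psi \in \Aut(\id_\J)$. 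Conversely, any $\psi \in \Aut(\id_\J)$ manifestly yields an invertible family commuting generically with every arrow, giving the reverse inclusion; and since all terms in $\TJ$ are provably total, the reflecting-definedness condition is vacuous (as already noted before the Proposition). Assembling these equivalences gives the stated description of $\Z(M)$.

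I expect the main obstacle to be the invertibility step: one must argue carefully that the substitutional inverse of $[f_i(\x_i)]$ cannot acquire a constant summand or involve a different arrow, i.e. that the normal form is genuinely forced, so that invertibility in $M\la \x_i \ra_i$ reduces cleanly to invertibility of $f_i$ in the endomorphism monoid $\J(i,i)$. This is where the uniqueness clause of Lemma~\ref{arrowequality} does the essential work, and care is needed to rule out the degenerate constant classes $[a]$ as candidate inverses.
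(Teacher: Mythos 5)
Your proposal is correct and follows essentially the same route as the paper: use Lemma~\ref{arrowequality} to put each component in the form $[\psi_i(\x_i)]$, read off naturality from generic commutativity and componentwise invertibility from substitutional invertibility, and note that reflecting definedness is vacuous. One small remark: your caveat that a constant class $[a]$ ``cannot be invertible unless $M_i$ is a singleton'' is unnecessary, since invertibility here is \emph{substitutional} (provably in $\TJ(M,\x_i)$) and a constant is never substitutionally inverse to $\x_i$ because $M\la \x_i\ra_i \cong M_i + \J(i,i)$ is a disjoint union, so no degenerate case arises.
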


\begin{proof}
It is straightforward to prove the right-to-left inclusion using the assumption that 
$\psi$ is a natural automorphism of $\id_\J$, so let us turn to the less obvious converse inclusion.
So suppose that $([s_i])_{i \in \J} \in \Z_{\TJ}(M) \subseteq \prod_i M \la \x_i \ra_i$. By the lemma, as well as the fact that
invertible terms must contain the indeterminate, we may represent $s_i=\psi_i(\x_i)$, where $\psi_i:i \to i$\/ is a map in $\J$. 
We show that $\psi := (\psi_i)_{i \in \J}$ is a natural automorphism of $\id_\J$. First, each $\psi_i : i \to i$ is an isomorphism: take the inverse $([t_i])_i$\/ of 
$([s_i])_i$, and represent this inverse as $\chi_i(\x_i)$\/ for $\chi_i:i \to i$. Since 
$\TJ(M, \x_i)$ proves the equations $(\psi_i \circ \chi_i)(\x_i) = \psi_i(\chi_i(\x_i)) = \x_i = \mathsf{id}_i(\x_i)$\/
 and $(\chi_i \circ \psi_i)(\x_i) = \mathsf{id}_i(\x_i)$, it follows by Lemma~\ref{arrowequality} that $\psi_i$\/ is the inverse of $\chi_i$.

To show that $\psi$ is natural, let $f : j \to k$ be any arrow in $\J$, and let us show that 
$\psi_k \circ f = f \circ \psi_j$. We know that $\left(\left[\psi_i(\x_i)\right]\right)_i = [s_i]_i$\/ 
commutes generically with the function symbol $f : X_j \to X_k$ of $\Sigma_\J$, which implies that 
$\TJ(M, \x_j) \vdash (\psi_k \circ f)(\x_j) = (f \circ \psi_j)(\x_j)$, from which we 
obtain the required $\psi_k \circ f = f \circ \psi_j$ again by Lemma~\ref{arrowequality}. 
Thus $\psi : \id_\J \xrightarrow{\sim} \id_\J$ is indeed a natural automorphism with $([s_i])_i = \left(\left[\psi_i(\x_i)\right]\right)_i$.                
\end{proof} 

\begin{corollary}
\label{presheafisotropycor}
Let $\J$ be a small category. For any functor $F : \J \to \set$ we have $\Z(F) \cong \Aut\left(\id_\J\right)$, 
and hence the covariant isotropy group functor of $\set^\J$\/ is constant on the automorphism group of $\id_\J$.  
\end{corollary}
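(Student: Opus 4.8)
The plan is to exhibit an explicit isomorphism $\Phi : \Aut(\id_\J) \to \Z(M)$ and read the corollary off from it, where $M \in \TJmod$ is the $\TJ$-model corresponding to $F : \J \to \set$ under the isomorphism $\TJmod \cong \set^\J$. The candidate map, dictated by Proposition~\ref{presheafisotropyprop}, sends a natural automorphism $\psi = (\psi_i)_{i \in \J}$ of $\id_\J$ to the family $\left([\psi_i(\x_i)]\right)_i \in \prod_{i \in \J} M\la \x_i \ra_i$. Much of the work is already done: Proposition~\ref{presheafisotropyprop} states precisely that $\Phi$ lands in $\Z(M)$ and is surjective. It therefore remains to check that $\Phi$ is an injective group homomorphism, and then to verify that the resulting isomorphisms are independent of $F$ in the appropriate sense.

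For injectivity, I would suppose $\left([\psi_i(\x_i)]\right)_i = \left([\phi_i(\x_i)]\right)_i$; then for each sort $i$ we have $\TJ(M, \x_i) \vdash \psi_i(\x_i) = \phi_i(\x_i)$, and Lemma~\ref{arrowequality} forces $\psi_i = \phi_i$, whence $\psi = \phi$. For the homomorphism property I would compute the substitution product componentwise, using the convention $[t] \cdot [s] = [t[s/\x_i]]$: at sort $i$,
\[ [\psi_i(\x_i)] \cdot [\phi_i(\x_i)] = \left[\psi_i(\x_i)[\phi_i(\x_i)/\x_i]\right] = [\psi_i(\phi_i(\x_i))] = [(\psi_i \circ \phi_i)(\x_i)], \]
which is exactly the component at $i$ of the image of the vertical composite $\psi \cdot \phi$, since $(\psi \cdot \phi)_i = \psi_i \circ \phi_i$ in $\J$. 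Thus $\Phi$ converts composition in $\Aut(\id_\J)$ into substitution in $\Z(M)$, and together with surjectivity and injectivity this yields $\Z(F) \cong \Z(M) \cong \Aut(\id_\J)$.

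To justify that the isotropy functor is genuinely \emph{constant} on $\Aut(\id_\J)$, I would then check naturality of these isomorphisms in $F$. The functorial action of $\Z$ on a homomorphism (natural transformation) $\eta : M \to N$ is induced by $\eta\la \bar{\x} \ra$, which fixes each indeterminate $\x_i$ and acts on constants of $M$ through $\eta$. The key observation is that the representative $\psi_i(\x_i)$ contains \emph{no} constants from $M$ — it is merely the function symbol $\psi_i$ applied to $\x_i$ — so $\eta\la \x_i \ra$ fixes it. Hence $\Z(\eta)$ sends $\left([\psi_i(\x_i)]\right)_i$ to itself, corresponding under the isomorphisms to the identity on $\Aut(\id_\J)$, which is exactly the statement that the functor is constant.

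I expect the only real subtlety to be bookkeeping around the group structure: ensuring that the direction of substitution matches vertical composition of natural transformations (so that $\Phi$ is a genuine homomorphism rather than an anti-homomorphism), and confirming that the constancy claim is the naturality statement above and not merely a pointwise isomorphism. Everything else follows immediately from Proposition~\ref{presheafisotropyprop} and Lemma~\ref{arrowequality}.
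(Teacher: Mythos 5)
Your proposal is correct and follows essentially the same route as the paper: both build the bijection between $\Z(F)$ and $\Aut(\id_\J)$ directly from Proposition~\ref{presheafisotropyprop} (surjectivity) and Lemma~\ref{arrowequality} (injectivity/well-definedness), with the homomorphism property read off from the substitution product; you merely orient the map in the opposite direction. Your explicit verification that $\Z(\eta)$ fixes the representatives $[\psi_i(\x_i)]$ (since they contain no constants from $M$) is a welcome extra step that the paper leaves implicit in justifying the word \emph{constant}.
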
 

\begin{proof}
Given $([s_i])_{i \in \J} \in \Z_{\TJ}(F)$, we know by Proposition \ref{presheafisotropyprop} that there 
is some $\psi \in \Aut(\id_\J)$ with $[s_i]_i = [\psi_i(\x_i)]_i$. We now show that this assignment 
$([s_i])_i \mapsto \psi$ is a well-defined group isomorphism $\Z_{\TJ}(F) \xrightarrow{\sim} \Aut(\id_\J)$. 
It is well-defined, because if there is also some $\chi \in \Aut(\id_\J)$ with $[s_i]_i = [\psi_i(\x_i)]_i = [\chi_i(\x_i)]_i$, 
then from Lemma \ref{arrowequality} we obtain $\psi = \chi$. It is clearly injective, it is surjective by 
Proposition \ref{presheafisotropyprop}, and it is readily seen to preserve group multiplication, so that it is indeed a group isomorphism. 
\end{proof}

\noindent We can now use Corollary \ref{presheafisotropycor} to characterize the 
covariant isotropy groups of certain presheaf categories of interest.

\begin{proposition}
\label{Gsetisotropy}
If $M$ is a monoid, then the covariant isotropy group $\Z : \set^M \to \grp$ of the category of 
$M$-sets and $M$-equivariant maps is constant on $\mathsf{Inv}(Z(M))$, the subgroup of invertible 
elements of the centre of $M$. In particular, if $G$ is a group, then the covariant isotropy group $\Z : \set^G \to \grp$\/ is constant on $Z(G)$. 
\end{proposition}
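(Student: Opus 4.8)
The plan is to apply Corollary~\ref{presheafisotropycor} to the special case $\J = M$, viewing the monoid $M$ as a one-object category. Under this identification, the presheaf category $\set^M$ is precisely the category of $M$-sets and $M$-equivariant maps, so the corollary tells us that the covariant isotropy group of $\set^M$ is constant on $\Aut(\id_M)$, the group of natural automorphisms of the identity functor on the one-object category $M$. The entire task thus reduces to a purely elementary identification: computing $\Aut(\id_M)$ explicitly and showing it is isomorphic to $\inv(Z(M))$.

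First I would unwind what a natural automorphism $\psi : \id_M \xrightarrow{\sim} \id_M$ amounts to when $M$ has a single object $*$. Such a $\psi$ consists of a single component $\psi_* : * \to *$, i.e. an element $c \in M$, and the naturality condition $\psi_* \circ f = f \circ \psi_*$ for every arrow $f$ (i.e. every element of $M$) says exactly that $cf = fc$ for all $f \in M$, so that $c$ lies in the centre $Z(M)$. The requirement that $\psi$ be an \emph{automorphism} (invertible as a natural transformation) forces $c$ to be an invertible element of $M$; combined with centrality, $c \in \inv(Z(M))$. Conversely, any $c \in \inv(Z(M))$ yields a natural automorphism via $\psi_* = c$. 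Checking that this correspondence is a group isomorphism is routine: composition of natural transformations corresponds to multiplication in $M$, since $(\psi \circ \phi)_* = \psi_* \circ \phi_*$.

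Chaining this identification with Corollary~\ref{presheafisotropycor} then gives $\Z(F) \cong \Aut(\id_M) \cong \inv(Z(M))$ for every $M$-set $F$, which is the first assertion. For the final sentence, I would specialize to the case where $M = G$ is a group: here every element is invertible, so $\inv(Z(G)) = Z(G)$, the centre of the group in the usual sense, and the covariant isotropy group of $\set^G$ is constant on $Z(G)$.

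I do not expect any genuine obstacle here, as the heavy lifting has already been carried out in Proposition~\ref{presheafisotropyprop} and Corollary~\ref{presheafisotropycor}. The only point requiring a little care is the bookkeeping between the two notions of ``centre'': the group-theoretic centre $Z(M)$ as a subset of $M$ versus the categorical centre $\Aut(\id_\J)$ of the delooping of $M$; making sure the invertibility condition on natural transformations correctly cuts $Z(M)$ down to $\inv(Z(M))$ is the one subtlety to state explicitly.
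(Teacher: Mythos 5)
Your proposal is correct and follows exactly the paper's argument: apply Corollary~\ref{presheafisotropycor} to the one-object category $M$ and identify $\Aut(\id_M)$ with $\inv(Z(M))$. The paper states this identification without proof, so your explicit unwinding of naturality (centrality) and invertibility is just a more detailed rendering of the same proof.
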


\begin{proof}
The result follows immediately from Corollary \ref{presheafisotropycor} and the observation that the 
automorphism group of the identity functor on the monoid $M$, regarded as a one-object category, is isomorphic to $\mathsf{Inv}(Z(M))$.
\end{proof}  

\begin{proposition}
\label{posetisotropy}
Let $\J$ be a \emph{rigid} category, i.e. a category whose objects have no non-identity automorphisms 
(e.g. $\J$ could be a preorder or poset). Then the covariant isotropy group $\Z : \set^\J \to \grp$\/ is trivial. \qed  
\end{proposition}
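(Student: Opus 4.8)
The plan is to apply Corollary~\ref{presheafisotropycor}, which reduces the computation of the covariant isotropy of $\set^\J$ to the identification of the group $\Aut(\id_\J)$ of natural automorphisms of the identity functor. Thus it suffices to show that for a rigid category $\J$, we have $\Aut(\id_\J) = 1$, i.e. that the only natural automorphism of $\id_\J$ is the identity transformation.

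First I would recall that a natural automorphism $\psi : \id_\J \xrightarrow{\sim} \id_\J$ consists of a family of isomorphisms $\psi_i : i \to i$ in $\J$, one for each object $i \in \ob(\J)$, satisfying the naturality condition $\psi_j \circ f = f \circ \psi_i$ for every arrow $f : i \to j$. The crucial observation is that each component $\psi_i$ is an automorphism of the object $i$ in $\J$. By the rigidity hypothesis, the object $i$ has no non-identity automorphisms, so $\psi_i = \mathsf{id}_i$ for every $i \in \ob(\J)$. Hence $\psi$ is the identity natural transformation, and therefore $\Aut(\id_\J)$ is the trivial group.

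Combining this with Corollary~\ref{presheafisotropycor}, we conclude that $\Z(F) \cong \Aut(\id_\J) = 1$ for every functor $F : \J \to \set$, so the covariant isotropy group of $\set^\J$ is trivial, as claimed. I do not anticipate any real obstacle here: once the reduction via Corollary~\ref{presheafisotropycor} is in place, the argument is immediate, since rigidity is precisely the hypothesis that kills each component $\psi_i$ individually, and naturality then plays no further role. The only point worth flagging is that one should verify that each $\psi_i$ is genuinely an endo-isomorphism of a \emph{single} object (rather than relating distinct objects), which is automatic from the fact that the components of a natural transformation between $\id_\J$ and itself are indexed by objects and have matching domain and codomain.
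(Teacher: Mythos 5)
Your proof is correct and is exactly the argument the paper intends: the proposition is stated with no explicit proof precisely because it follows immediately from Corollary~\ref{presheafisotropycor} once one notes that rigidity forces every component $\psi_i$ of a natural automorphism of $\id_\J$ to be $\mathsf{id}_i$, so $\Aut(\id_\J)=1$.
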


We point out that Corollary~\ref{presheafisotropycor} illustrates an important difference between covariant isotropy $\set^\J \to \grp$\/ and
contravariant isotropy $\left(\set^\J\right)^{\mathrm{op}} \to \grp$. Indeed, the latter is generally \emph{not} constant, but is a representable functor
$F \mapsto \set^\J[F,Z]$\/ for a suitable presheaf of groups $Z$, that is, an internal group object in $\set^\J$. 
The connection between covariant and contravariant isotropy is then as follows: the group of global sections of $Z$\/ is isomorphic to
the group $\aut(\id_\J)$:
\[ \Gamma(Z) = \set^\J(1, Z) \cong \Z(F) \text{ for } F:\J \to \set .\]

\section{Conclusions and future work}
We have shown how a syntactic description of polymorphic automorphisms
can be fruitfully applied to characterize the covariant isotropy 
of several kinds of structures of relevance in logic, algebra, and computer science. Most notably,
we have shown that the covariant isotropy group of a strict monoidal category coincides with its Picard group
of invertible objects. We have also shown that the covariant isotropy group of a presheaf category $\set^\J$\/
behaves quite differently from the contravariant one, in that it is the constant group with value $\aut(\id_\J)$.

There are several open questions and possible lines for further inquiry: 

\begin{enumerate}
\item The generalization from algebraic to quasi-equational theories presented in this paper 
is the first step on a path upwards through the various fragments of logic. In particular,
we hope to generalize some of the techniques to determine
the isotropy groups of some geometric theories of interest.
\item We have shown how to determine the covariant isotropy groups of presheaf categories,
but we have left open the question of how to determine the isotropy of \emph{sheaf} toposes. 
In particular, it would be of interest to determine the covariant isotropy of the topos of nominal sets
(also known as the Schanuel topos). 
\item For a theory $\T$\/ and diagram category $\J$, there is a
theory $\mS=\mS(\T,\J)$\/ with $\mS\mathsf{mod}\cong\Tmod^\J$ (in Section~\ref{subsec:psh} we considered
the special case where $\T$\/ is the trivial theory, i.e. the theory of sets). 
In~\cite[Chapter 5]{thesis} the second author has
obtained, under mild assumptions on $\T$, a description of the covariant isotropy group of $\TJmod$\/ 
in terms of $\Aut(\id_\J)$ and the isotropy group of $\T$. 
\item We have not yet investigated in detail how isotropy behaves with respect to morphisms of theories $\rho:\T \to \mS$.
(We have seen a rather special case in Section~\ref{sec:picard} with $\ob:\StrMonCat \to \mon$, but the general case
is more involved.) 
\item One possible perspective on the theory of strict monoidal categories is that it is a \emph{tensor product} of the theory of categories
with that of monoids. This leads to the question whether, under suitable conditions on the theories $\T$ and $\mS$, we can 
describe the isotropy of $\T \otimes \mS$\/ in terms of that of $\T$\/ and $\mS$.
\item One can define, for a 2-category $\E$ and object $X \in \E$, the group of natural auto-equivalences of $X/\E \to \E$. This leads to a 2-dimensional version
of isotropy, taking values in 2-groups. It is then possible to show that the 2-isotropy group of a (non-strict) monoidal category
(regarded as an object in the 2-category of monoidal categories and strong monoidal functors) is the Picard 2-group. This will be presented in forthcoming work.  
\end{enumerate}

\bibliography{HPS}

\end{document}